\apptocmd{\sloppy}{\hbadness 10000\relax}{}{}
        \DeclarePairedDelimiter{\abs}{\lvert}{\rvert} 
        \DeclareMathOperator{\ancilla}{a} 
        \DeclareMathOperator{\data}{d} 
        \DeclareMathOperator{\permutation}{p} 
        \DeclareMathOperator{\id}{id} 
        \DeclareMathOperator{\bin}{bin} 
        \newcommand*{\control}{\texttt{C}} 
        \newcommand*{\h}{\texttt{H}} 
        \newcommand*{\ch}{\texttt{CH}} 
        \newcommand*{\x}{\texttt{X}} 
        \newcommand*{\cx}{\texttt{CX}} 
        \newcommand*{\ry}{\texttt{RY}} 
        \newcommand*{\rz}{\texttt{RZ}} 
        \newcommand*{\cry}{\texttt{CRY}} 
        \newcommand*{\swap}{\texttt{SWAP}} 
        \newcommand*{\N}{\mathbb{N}}
        \newcommand*{\C}{\mathbb{C}}
        \newcommand*{\hil}{\mathcal{H}}
        \newcommand*{\bigo}{\mathcal{O}} 
        \newcommand*{\hamming}{\Delta} 
        \newcommand*{\numbits}[1]{\abs{{#1}}_{\bin}} 
        \newcommand*{\scriptin}{\raisebox{0.15ex}{$\scriptscriptstyle\in$}} 
            \newcommand*{\symgroup}{S}
            \newcommand*{\qubitc}{\ensuremath{\mathbf{Q}}} 
            \newcommand*{\gatec}{\ensuremath{\mathbf{G}}} 
            \newcommand*{\cyclec}{\ensuremath{\mathbf{C}}} 
\declaretheorem[style=plain]{theorem}
\declaretheorem[style=plain,sibling=theorem]{corollary}
\declaretheorem[style=plain,sibling=theorem]{lemma}
\begin{document}

\title{Quantum Fisher-Yates shuffle: Unifying methods for generating uniform superpositions of permutations}

\author{Lennart \surname{Binkowski}}
\email{lennart.binkowski@itp.uni-hannover.de}
\affiliation{Institut für Theoretische Physik, Leibniz Universität Hannover}
\author{Marvin \surname{Schwiering}}
\affiliation{Institut für Theoretische Physik, Leibniz Universität Hannover}

\begin{abstract}
    Uniform superpositions over permutations play a central role in quantum error correction, cryptography, and combinatorial optimisation.
    We introduce a simple yet powerful quantisation of the classical Fisher–Yates shuffle, yielding a suite of efficient quantum algorithms for preparing such superpositions on composite registers.
    Our method replaces classical randomness with coherent control, enabling five variants that differ in their output structure and entanglement with ancillary systems.
    We demonstrate that this construction achieves the best known combination of asymptotic resources among all existing approaches, requiring only $\bigo(n \log(n))$ qubits and $\bigo(n^{2} \log(n))$ gates and circuit depth.
    These results position the quantum Fisher–Yates shuffle as a strong candidate for optimality within this class of algorithms.
    Our work unifies several prior constructions under a single, transparent framework and opens up new directions for quantum state preparation using classical combinatorial insights.
    Our implementation in Qiskit is available as open-source code, supporting reproducibility and future exploration of quantum permutation-based algorithms. 
\end{abstract}

\maketitle

\section{\label{section:Introduction}Introduction}

Uniform superpositions of permutations have emerged as a powerful and recurring motif across multiple domains of quantum computing, underpinning advances in quantum error correction, quantum cryptography, and quantum-enhanced optimisation.
Their utility arises from the fundamental role that permutation symmetry plays in encoding, security, and search processes on quantum devices.

In quantum error correction, permutation-invariant codes exploit the symmetry under qubit permutations to protect against collective noise, spontaneous decay, and qubit loss~\cite{Barenco1997StabilizationOfQuantumComputationsBySymmetrization,Pollatsek2004PermutationallyInvariantCodesForQuantumErrorCorrection,Ouyang2014PermutationInvariantQuantumCodes,Ouyang2016PermutationInvariantCodesEncodingMoreThanOneQubit,Wu2019InitializingAPermutationInvariantQuantumErrorCorrectionCode}.
These codes enable robust encoding schemes that remain invariant under subsystem exchange, facilitating resilience in noisy intermediate-scale quantum (NISQ) environments.

In quantum cryptography, superpositions over permutations enable provably secure encryption protocols, such as the quantum permutation pad~\cite{Kuang2022QuantumPermutationPadForUniversalQuantumSafeCryptography}, and support keyless secure communication~\cite{Yang2013QuantumNoKeyProtocolForSecureCommunicationOfClassicalMessage}.
These structures also play an increasingly prominent role in post-quantum cryptography~\cite{Alagic2024PostQuantumSecurityOfTweakableEvenMansourAndApplications,Budroni2024SokMethodsForSamplingRandomPermutationsInPostQuantumCryptography,Majenz2024PermutationSuperpositionOraclesForQuantumQueryLowerBounds}, where they serve as building blocks for quantum-secure primitives and as analytical tools in oracle-based lower bounds.

In quantum optimisation, many combinatorial problems—most notably the Travelling Salesperson Problem (TSP)—are naturally encoded as search problems over the space of permutations.
Uniform superpositions of permutations are thus integral to quantum heuristic methods such as Grover-based search~\cite{Grover1996AFastQuantumMechanicalAlgorithmForDatabaseSearch,Zhu2022ARealizableGASBasedQuantumAlgorithmForTravelingSalesmanProblem,Bang2012AQuantumHeuristicAlgorithmForTheTravelingSalesmanProblem,Sato2023EmbeddingAllFeasibleSolutionsOfTravelingSalesmanProblemByDivideAndConquerQuantumSearch,Bai2025AQuantumSpeedupAlgorithmForTSPBasedOnQuantumDynamicProgrammingWithVeryFewQubits} and QAOA~\cite{Farhi2014AQuantumApproximateOptimizationAlgorithm,Hadfield2019FromTheQuantumApproximateOptimizationAlgorithmToAQuantumAlternatingOperatorAnsatz,Baertschi2020GroverMixersForQAOAShiftingComplexityFromMixerDesignToStatePreparation,Bourreau2023IndirectQuantumApproximateOptimizationAlgorithmsApplicationsToTheTSP}, where they facilitate exploration of the feasible solution space.
These approaches extend naturally to more complex scheduling and routing problems~\cite{Amaro2022ACaseStudyOfVariationalQuantumAlgorithmsForAJobShopSchedulingProblem,Palackal2023GraphControlledPermutationMixersInQAOAForTheFlexibleJobShopProblem,Kossmann2023OpenShopSchedulingWithHardConstraints,Kurowski2023ApplicationOfQuantumApproximateOptimizationAlgorithmToJobShopSchedulingProblem,Bogatyrev2023ApplicationOfGroversAlgorithmInRouteOptimization,Azad2023SolvingVehicleRoutingProblemUsingQuantumApproximateOptimizationAlgorithm,Palackal2023QuantumAssistedSolutionPathsForTheCapacitatedVehicleRoutingProblem}, where permutations define the structure of feasible assignments.

Beyond these applications, uniform superpositions over permutations also appear in theoretical contexts such as quantum complexity theory~\cite{Aharonov2003AdiabaticQuantumStateGenerationAndStatisticalZeroKnowledge} and entanglement theory~\cite{Toth2009EntanglementAndPermutationalSymmetry}, further underscoring their foundational character.

The first known quantum circuit for generating a uniform superposition over all permutations was proposed by~\citet{Barenco1997StabilizationOfQuantumComputationsBySymmetrization}, employing \emph{monotone factorisation} to decompose permutations into a sequence of transpositions, each conditioned via ancilla registers prepared in uniform superpositions. Their circuit exemplifies early instances of the now-standard Linear Combination of Unitaries (LCU) framework~\cite{Childs2012HamiltonianSimulationUsingLinearCombinationsOfUnitaryOperations}, with each controlled \swap-gate implementing a unitary in the sum.

Subsequent constructions have explored alternative permutation encodings. \citet{Chiew2019GraphComparisonViaNonlinearQuantumSearch} employed \emph{mixed radix} representations~\cite{Sedgewick1977PermutationGenerationMethods}, encoding permutations into integer vectors and reconstructing them via sequences of adjacent transpositions. Similar encodings, including Lehmer codes~\cite{Lehmer1960TeachingCombinatorialTricksToAComputer}, were used in~\cite{Marsh2020CombinatorialOptimizationViaHighlyEfficientQuantumWalks}.
Meanwhile, \citet{Baertschi2020GroverMixersForQAOAShiftingComplexityFromMixerDesignToStatePreparation} proposed a method for generating uniform superpositions of $n \times n$ permutation matrices by iteratively building row vectors under column constraints, using ancilla registers to enforce orthogonality conditions in superposition.

More recently, \citet{Majenz2024PermutationSuperpositionOraclesForQuantumQueryLowerBounds} revived the monotone factorisation perspective to formalise permutation oracles, while \citet{Adhikari2024RandomSamplingOfPermutationsThroughQuantumCircuits} adapted the classical Steinhaus–Johnson–Trotter algorithm~\cite{Steinhaus1979OneHundredProblemsInElementaryMathematics,Johnson1963GenerationOfPermutationsByAdjacentTransposition,Trotter1962AlgorithmPerm} for random sampling in quantum circuits, reversing the direction of adjacent swaps used in earlier works.
\citet{Bai2025AQuantumSpeedupAlgorithmForTSPBasedOnQuantumDynamicProgrammingWithVeryFewQubits} proposed an iterative insertion-based strategy which also relies on the monotone factorisation.

\begin{figure*}
    \begin{minipage}[t]{0.47\linewidth}
    \begin{algorithm}[H]\label{algorithm:FisherYates}
        \caption{\texttt{Fisher\_Yates}($a$)}
        \For{$i \leftarrow n - 1$ \KwTo $1$}{
            $j \leftarrow \text{random integer such that } 0 \leq j \leq i$\;
            exchange $a[j] \text{ and } a[i]$\;
        }
    \end{algorithm}
    \end{minipage}\hfill
    \begin{minipage}[t]{0.47\linewidth}
    \begin{algorithm}[H]\label{algorithm:ReverseFisherYates}
        \caption{\texttt{Reversed\_Fisher\_Yates}($a$)}
        \For{$i \leftarrow 1$ \KwTo $n - 1$}{
            $j \leftarrow \text{random integer such that } 0 \leq j \leq i$\;
            exchange $a[j] \text{ and } a[i]$\;
        }
    \end{algorithm}
    \end{minipage}
    \caption{\label{figure:FisherYatesShuffle}The Fisher-Yates shuffle as proposed by \citet{Durstenfeld1964AlgorithmRandomPermutation} and it's reversed version.}
\end{figure*}

Despite the proliferation of constructions, many recent approaches rediscover a small set of core techniques, often rooted in classical combinatorics.
This redundancy suggests a need for conceptual unification and principled design strategies.
In this work, we address this gap by presenting a streamlined derivation of quantum circuits based on the classical Fisher-Yates shuffle~\cite{Fisher1948StatisticalTablesForBiologicalAgriculturalAndMedicalResearch}. This approach not only recovers the original circuit of~\citet{Barenco1995ElementaryGatesForQuantumComputation} as a special case, but also provides a clear path from classical random sampling to coherent quantum state preparation.

By leveraging the Fisher–Yates algorithm and coupling it with efficient quantum subroutines for uniform superposition preparation~\cite{Shukla2024AnEfficientQuantumAlgorithmForPreparationOfUniformQuantumSuperpositionStates}, we construct quantum circuits that are asymptotically optimal in terms of qubit count, gate complexity, and depth among existing algorithms.
We provide a publicly available Qiskit implementation of these circuits is available under \cite{Binkowski2025QuantumFisherYates}.
In doing so, we aim to provide a unifying and practically efficient framework for permutation-based quantum algorithms across a wide range of applications.

\section{\label{section:TheObjective}The objective}

Let us first precisely state the objective we wish to tackle with our circuit construction.
For a given number $n \in \N$, we consider $n$ multi-qubit registers, each comprising $\lceil\log_{2}(n)\rceil$ qubits.
A single permutation $\sigma \in \symgroup_{n}$, expressed in inverted ``one-line notation''/``word representation'' $\sigma^{-1}(0)\, \sigma^{-1}(1)\, \cdots\, \sigma^{-1}(n - 1)$, is represented by the product state
\begin{align}
    \ket{\sigma}_{\permutation} \coloneqq \bigotimes_{k = 0}^{n - 1} \ket{\sigma^{-1}(k)}_{\permutation : k},
\end{align}
where the state $\ket{\sigma(k)}_{\permutation : k}$ is the binary encoding of the integer $\sigma(k)$ in the computational basis and belongs to the $k$-th $\lceil \log_{2}(n)\rceil$-qubit register, respectively.

Our first goal is to construct a unitary $A$ of length polynomial in $n$, acting on the $n \lceil\log_{2}(n)\rceil$-qubit permutation register $\hil_{\permutation}$ and an ancilla register $\hil_{\ancilla}$ so that
\begin{align}\label{equation:DisentanglingQuantumFisherYatesStateGeneration}
    A \ket{\bm{0}}_{\permutation} \otimes \ket{0}_{\ancilla} = \Bigg(\frac{1}{\sqrt{n!}} \sum_{\sigma \scriptin \symgroup_{n}} \ket{\sigma}_{\permutation}\Bigg) \otimes \ket{0}_{\ancilla}.
\end{align}
The main characteristic of $A$, apart from creating the uniform superposition of all permutations of $n$ elements in the register $\hil_{\permutation}$, is to leave $\hil_{\permutation}$ and $\hil_{\ancilla}$ disentangled after the computation.
We can reduce the circuit length at the expense of additional ancilla qubits and loosing the property of leaving $\hil_{\permutation}$ and $\hil_{\ancilla}$ disentangled.
That is, we also consider the less restrictive task of constructing a unitary $\tilde{A}$ acting on $\hil_{\permutation} \otimes \hil_{\ancilla}$ so that
\begin{align}\label{equation:EntanglingQuantumFisherYatesStateGeneration}
    \tilde{A} \ket{\bm{0}}_{\permutation} \otimes \ket{0}_{\ancilla} = \frac{1}{\sqrt{n!}} \sum_{\sigma \scriptin \symgroup_{n}} \big(\ket{\sigma}_{\permutation} \otimes \ket{\phi_{\sigma}}_{\ancilla}\big)
\end{align}
with some arbitrary ancilla states $\ket{\phi_{\sigma}}_{\ancilla}$.

Our second goal is to create a uniform superposition of all permutations of an $n$-fold composite register $\hil_{\data}$ whose $n$ subregisters $\hil_{\data : k}$, $0 \leq k \leq n - 1$, have $m$ qubits, respectively.
In doing so, we record the respectively applied permutations in a second register.\footnote{
    With this construction, we consciously avoid attempting to place e.g.\ the graph isomorphism problem in BQP (see \cite{Aharonov2003AdiabaticQuantumStateGenerationAndStatisticalZeroKnowledge}).
}
That is, we consider the task of constructing a shuffling unitary $B$, acting on $\hil_{\data} \otimes \hil_{\permutation} \otimes \hil_{\ancilla}$ so that
\begin{align}\label{equation:DisentanglingQuantumFisherYatesShuffle}
\begin{split}
    &B \bigotimes_{k = 0}^{n - 1} \ket{\psi_{k}}_{\data : k} \otimes \ket{\bm{0}}_{\permutation} \otimes \ket{0}_{\ancilla} \\
    &\quad\ = \Bigg(\frac{1}{\sqrt{n!}} \sum_{\sigma \scriptin \symgroup_{n}} \bigotimes_{k = 0}^{n - 1} \ket{\psi_{\sigma^{-1}(k)}}_{\data : k} \otimes \ket{\sigma}_{\permutation}\Bigg) \otimes \ket{0}_{\ancilla}.
\end{split}
\end{align}

Again, at the expense of additional ancilla qubits and remaining entanglement between the data, permutation, and ancilla register, we may reduce the circuit length.
The corresponding task is to construct a shuffle unitary $\tilde{B}$ acting on 
$\hil_{\data} \otimes \hil_{\permutation} \otimes \hil_{\ancilla}$ so that
\begin{align}\label{equation:EntanglingQuantumFisherYatesShuffle}
\begin{split}
    &\tilde{B} \bigotimes_{k = 0}^{n - 1} \ket{\psi_{k}}_{\data : k} \otimes \ket{\bm{0}}_{\permutation} \otimes \ket{0}_{\ancilla}\\
    &\quad\ = \frac{1}{\sqrt{n!}} \sum_{\sigma \scriptin \symgroup_{n}} \Bigg(\bigotimes_{k = 0}^{n - 1} \ket{\psi_{\sigma^{-1}(k)}}_{\data : k} \otimes \ket{\sigma}_{\permutation} \otimes \ket{\phi_{\sigma}}_{\ancilla}\Bigg)
\end{split}
\end{align}
with some arbitrary ancilla states $\ket{\phi_{\sigma}}_{\ancilla}$.
Note that in this second version, we may also omit the permutation register $\hil_{\permutation}$ entirely since the information about which permutation has been applied is uniquely encoded in the ancilla register $\hil_{\ancilla}$.
We state this task explicitly as constructing a light shuffle unitary $C$ acting on $\hil_{\data} \otimes \hil_{\ancilla}$ so that
\begin{align}\label{equation:EntanglingQuantumFisherYatesLightShuffle}
\begin{split}
    &C \bigotimes_{k = 0}^{n - 1} \ket{\psi_{k}}_{\data : k} \otimes \ket{0}_{\ancilla}\\
    &\quad\ = \frac{1}{\sqrt{n!}} \sum_{\sigma \scriptin \symgroup_{n}} \Bigg(\bigotimes_{k = 0}^{n - 1} \ket{\psi_{\sigma^{-1}(k)}}_{\data : k} \otimes \ket{\phi_{\sigma}}_{\ancilla}\Bigg).
\end{split}
\end{align}

\begin{figure*}[!th]
    \begin{minipage}[t]{0.47\linewidth}
        \begin{algorithm}[H]\label{algorithm:DisentanglingStatePreparation}
            \caption{$\vphantom{\tilde{A}}A(\hil_{\permutation}, \hil_{\ancilla}, n)$}
            \For{$k_{\vphantom{\ancilla}} \leftarrow 1$ \KwTo $n - 1$}{
                apply $\x_{0 \to k}$ to the subregister $\hil_{\permutation : k}$\;
            }
            \For{$i \leftarrow 1$ \KwTo $n - 1$}{
                apply $U_{i}$ on the register $\hil_{\ancilla}$\;
                \For{$j \leftarrow 0$ \KwTo $i - 1$}{
                    control-swap $\hil_{\permutation : j}$ and $\hil_{\permutation : i}$ via $C_{j}^{\ancilla}(\swap_{j, i})$\;
                }
                \For{$j \leftarrow 1$ \KwTo $i$}{
                    apply $C_{i}^{\permutation : j}(\x_{j \to 0})$ to the register $\hil_{\ancilla}$\;
                }
            }
        \end{algorithm}
    \end{minipage}\hfill
    \begin{minipage}[t]{0.47\linewidth}
        \begin{algorithm}[H]\label{algorithm:EntanglingStatePreparation}
            \caption{$\tilde{A}(\hil_{\permutation}, \hil_{\ancilla}, n)$}
            \tcc{$\hil_{\ancilla}$ is assumed to be composed of subregisters $\hil_{\ancilla : i}$, $1 \leq i \leq n - 1$ with $\numbits{i}$ qubits, respectively}
            \For{$k \leftarrow 1$ \KwTo $n - 1$}{
                apply $\x_{0 \to k}$ on the register $\hil_{\permutation : k}$\;
            }
            \For{$i \leftarrow 1$ \KwTo $n - 1$}{
                apply $U_{i}$ on the register $\hil_{\ancilla : i}$\;
                \For{$j \leftarrow 0$ \KwTo $i - 1$}{
                    control-swap $\hil_{\permutation : j}$ and $\hil_{\permutation : i}$ via $C_{j}^{\ancilla : i}(\swap_{j, i})$\;
                }
            }
        \end{algorithm}
    \end{minipage}
\caption{\label{figure:QuantumFisherYatesStateGeneration}
    Quantum circuit constructions for the disentangling and for the entangling state preparation.
    For a given $n \in \N$, both algorithms input an $n \lceil \log_{2}(n)\rceil$-qubit register $\hil_{\permutation}$, initialised to all-zero.
    \autoref{algorithm:DisentanglingStatePreparation} operates on an $\lceil\log_{2}(n)\rceil$-qubit ancilla register $\hil_{\ancilla}$, while \autoref{algorithm:EntanglingStatePreparation} requires $\sum_{i = 1}^{n - 1} \numbits{i}$ ancilla qubits since, in each iteration $1 \leq i \leq n - 1$, it operates on a different $\numbits{i}$-qubit subregister $\hil_{\ancilla : i}$.
    Apart from the uncomputation of the ancilla register (lines 9-11 in \autoref{algorithm:DisentanglingStatePreparation}), both algorithms operate in the same way:
    They first initialise the register $\hil_{\permutation}$ in the identity's (inverted) word representation, then iteratively bring the (respective subregister of the) ancilla register in a uniform superposition of the first $i + 1$ computational basis states, and, for all $0 \leq j \leq i$, swap the subregisters $\hil_{\permutation : j}$ and $\hil_{\permutation : i}$ controlled on whether the ancilla register is in the state $\ket{j}_{\ancilla}$.
}
\end{figure*}

\section{\label{section:QuantumFisherYatesShuffle}Quantum Fisher-Yates shuffle}

The classical Fisher-Yates shuffle~\cite{Fisher1948StatisticalTablesForBiologicalAgriculturalAndMedicalResearch}, assuming access to a true random number generator, shuffles an input array of length $n$ with a random permutation uniformly drawn from $\symgroup_{n}$.
By simply inputting an array with $a[k] = k$, i.e.\ the identity's (inverted) word representation, we may therefore use the algorithm to directly generate the inverted word representation of a random permutation of $n$ elements drawn uniformly from $\symgroup_{n}$.
The original shuffle operates in time $\bigo(n^{2})$.
Subsequently, the algorithm has been improved by \citet{Durstenfeld1964AlgorithmRandomPermutation} to operate in time $\bigo(n)$, which clearly constitutes the optimal runtime attainable for this task involving acting on $n$ elements.
Durstenfeld's version simply iterates through all array indices $i$ except the first one in descending order, samples for each $i$ an exchange index $j$ uniformly at random between $0$ and $i$, and exchanges the array elements at position $i$ and $j$ (see \autoref{algorithm:FisherYates} for the pseudo code).

An equally valid approach is to iterate through the array index $i$ in ascending order, but to leave the rest of the algorithm at it is (see \autoref{algorithm:ReverseFisherYates} for corresponding pseudo code).
We use this reversed version as the basis for a quantised Fisher-Yates shuffle as it has the conceptual advantage that when terminating the loop prematurely, say after iteration $i$, the first $i + 1$ elements are permuted randomly with a permutation uniformly drawn from $\symgroup_{i + 1}$\footnote{
    The reversed Fisher-Yates shuffle with it's recursive structure is conceptually similar to the Steinhaus-Johnson-Trotter algorithm used for successive generation of all $n$-element permutations.
}
It is fundamentally based on the observation that when building all the compositions of an element $\sigma \in \symgroup_{i}$ with one of the transpositions $\tau_{j, i} \coloneqq (j\, i)$, $0 \leq j \leq i$, which exchange the numbers $j$ and $i$, one obtains the entire $\symgroup_{i + 1}$.
Note that this an equivalent, recursive reformulation of the uniqueness of the monotone factorisation and also formalised in the subsequent lemma.

\begin{lemma}\label{lemma:RecursiveGenerationOfSymmetricGroup}
    Let $i \in \N$.
    For a permutation $\sigma \in \symgroup_{i}$, let $\tilde{\sigma} \in \symgroup_{i + 1}$ be defined via $\tilde{\sigma}(k) \coloneqq \sigma(k)$ for $0 \leq k \leq i - 1$ and $\tilde{\sigma}(i) \coloneqq i$.
    Then there is a bijection between $\{0, \ldots, i\} \times \symgroup_{i}$ and $\symgroup_{i + 1}$ via $(j, \sigma) \mapsto \tau_{j, i} \tilde{\sigma}$.
\end{lemma}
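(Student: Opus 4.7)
The plan is to produce an explicit two-sided inverse of the map $\Phi\colon \{0, \ldots, i\} \times \symgroup_{i} \to \symgroup_{i + 1}$, $(j, \sigma) \mapsto \tau_{j, i}\tilde{\sigma}$. Since $\abs{\{0, \ldots, i\} \times \symgroup_{i}} = (i+1) \cdot i! = (i+1)! = \abs{\symgroup_{i+1}}$, injectivity alone would suffice for bijectivity; nonetheless, writing the inverse down explicitly is instructive because it is precisely the decomposition implemented by a single iteration of the quantum Fisher--Yates loop in \autoref{algorithm:DisentanglingStatePreparation}.

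The guiding observation is that $\tilde{\sigma}$ fixes $i$, so $(\tau_{j, i}\tilde{\sigma})(i) = \tau_{j, i}(i) = j$. This determines the candidate inverse $\Psi\colon \symgroup_{i+1} \to \{0, \ldots, i\} \times \symgroup_{i}$, $\pi \mapsto \bigl(\pi(i), \sigma_{\pi}\bigr)$, where $\sigma_{\pi}$ is the restriction of $\tau_{\pi(i), i}\pi$ to $\{0, \ldots, i - 1\}$ (with the convention $\tau_{i, i} \coloneqq \id$ to cover the case $\pi(i) = i$). The well-definedness $\sigma_{\pi} \in \symgroup_{i}$ follows because $\tau_{\pi(i), i}\pi \in \symgroup_{i+1}$ and $(\tau_{\pi(i), i}\pi)(i) = \tau_{\pi(i), i}(\pi(i)) = i$, so this element restricts to a bijection on $\{0, \ldots, i - 1\}$.

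It then remains to verify the two composition identities. For $\Psi \circ \Phi = \id$, the observation above recovers the first coordinate $j$; the second coordinate equals $\bigl(\tau_{j, i}\tau_{j, i}\tilde{\sigma}\bigr)\big|_{\{0, \ldots, i - 1\}} = \tilde{\sigma}\big|_{\{0, \ldots, i - 1\}} = \sigma$ by involutivity of $\tau_{j, i}$. For $\Phi \circ \Psi = \id$, I would check that $\tau_{\pi(i), i}\widetilde{\sigma_{\pi}}$ and $\pi$ agree on $\{0, \ldots, i\}$: on $\{0, \ldots, i - 1\}$ they agree by the construction of $\sigma_{\pi}$ together with a second application of $\tau_{\pi(i), i}$, and at $i$ the left-hand side maps $i \mapsto i \mapsto \pi(i)$ since $\widetilde{\sigma_{\pi}}(i) = i$. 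I do not foresee any real obstacle; the only mild subtlety is the diagonal case $\pi(i) = i$, which is absorbed by the convention $\tau_{i, i} \coloneqq \id$ so that the statement is uniform in $j \in \{0, \ldots, i\}$.
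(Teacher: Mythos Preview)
Your argument is correct. Both your proof and the paper's rest on the same key observation: since $\tilde{\sigma}$ fixes $i$, evaluating $\tau_{j,i}\tilde{\sigma}$ at $i$ recovers $j$, after which involutivity of $\tau_{j,i}$ recovers $\sigma$. The paper packages this as a direct injectivity check (assume $\tau_{j,i}\tilde{\sigma} = \tau_{k,i}\tilde{\pi}$, evaluate at $i$ to get $j=k$, then cancel) and appeals to the equal cardinalities you also note. You instead promote the observation to an explicit two-sided inverse $\Psi(\pi) = \bigl(\pi(i),\,(\tau_{\pi(i),i}\pi)|_{\{0,\ldots,i-1\}}\bigr)$ and verify both compositions. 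The extra work is minimal, and your framing has the merit of making the decomposition $\pi = \tau_{\pi(i),i}\widetilde{\sigma_{\pi}}$ visible---which is exactly the monotone-factorisation step that each Fisher--Yates iteration implements, so the connection you draw to \autoref{algorithm:DisentanglingStatePreparation} is apt. The diagonal case $j=i$ is already covered in the paper's setup by the usual convention that $(i\ i)$ is the identity, so your explicit remark there is harmless redundancy.
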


\begin{proof}
    Since both sets have a cardinality of $(i + 1)!$, it suffices to establish injectivity of the proposed mapping.
    To this end, let $(j, \sigma), (k, \pi) \in \{0, \ldots, i\} \times S_{i}$ such that $\tau_{j, i} \tilde{\sigma} = \tau_{k, i} \tilde{\pi}$.
    Since both $ \tilde{\sigma}$ and $\tilde{\pi}$ act trivially on the element $i$, we immediately obtain
    \begin{align*}
        j = \tau_{j, i}(i) = \big(\tau_{j, i} \tilde{\sigma}\big)(i) = \big(\tau_{k, i} \tilde{\pi}\big)(i) = \tau_{k, i}(i) = k.
    \end{align*}
    Multiplying both sides of $\tau_{j, i} \tilde{\sigma} = \tau_{j, i} \tilde{\pi}$ with $\tau_{j, i}$ from the left then establishes $\tilde{\sigma} = \tilde{\pi}$, i.e.\ $\sigma = \pi$.
\end{proof}

We quantise \autoref{algorithm:ReverseFisherYates} in a straight-forward way by keeping the overall loop, but replacing lines 2 and 3 with creating a uniform superposition of all states $\ket{j}_{\ancilla}$ with $0 \leq j \leq i$ in the ancilla register and subsequently controlling each of the swaps of the $j$-th and $i$-th subregister of the permutation register (and the data register in case of shuffling) on the ancilla register being in the state $\ket{j}_{\ancilla}$, respectively.
Prior to this, we initialise the permutation register in the identity's (inverted) word representation to generate the desired uniform superposition of all permutations' inverted word representations.

\begin{figure*}[!th]
    \begin{minipage}[t]{0.47\linewidth}
        \begin{algorithm}[H]\label{algorithm:DisentanglingShuffle}
            \caption{$\vphantom{\tilde{B}}B(\hil_{\data}, \hil_{\permutation}, \hil_{\ancilla}, n)$}
            \For{$k_{\vphantom{\ancilla}} \leftarrow 1$ \KwTo $n - 1$}{
                apply $\x_{0 \to k}$ to the subregister $\hil_{\permutation : k}$\;
            }
            \For{$i \leftarrow 1$ \KwTo $n - 1$}{
                apply $U_{i}$ on the register $\hil_{\ancilla}$\;
                \For{$j \leftarrow 0$ \KwTo $i - 1$}{
                    control-swap $\hil_{\data : j}$ and $\hil_{\data : i}$ via $C_{j}^{\ancilla}(\swap^{\data}_{j, i})$\;
                    control-swap $\hil_{\permutation : j}$ and $\hil_{\permutation : i}$ via $C_{j}^{\ancilla}(\swap^{\permutation}_{j, i})$\;
                }
                \For{$j \leftarrow 1$ \KwTo $i$}{
                    apply $C_{i}^{\permutation : j}(\x_{j \to 0})$ to the register $\hil_{\ancilla}$\;
                }
            }
        \end{algorithm}
    \end{minipage}\hfill
    \begin{minipage}[t]{0.47\linewidth}
        \begin{algorithm}[H]\label{algorithm:EntanglingShuffle}
            \caption{$\tilde{B}(\hil_{\data}, \hil_{\permutation}, \hil_{\ancilla}, n)$}
            \tcc{$\hil_{\ancilla}$ is assumed to be composed of subregisters $\hil_{\ancilla : i}$, $1 \leq i \leq n - 1$ with $\numbits{i}$ qubits, respectively}
            \For{$k \leftarrow 1$ \KwTo $n - 1$}{
                apply $\x_{0 \to k}$ on the register $\hil_{\permutation : k}$\;
            }
            \For{$i \leftarrow 1$ \KwTo $n - 1$}{
                apply $U_{i}$ on the register $\hil_{\ancilla : i}$\;
                \For{$j \leftarrow 0$ \KwTo $i - 1$}{
                    control-swap $\hil_{\data : j}$ and $\hil_{\data : i}$ via $C_{j}^{\ancilla : i}(\swap^{\data}_{j, i})$\;
                    control-swap $\hil_{\permutation : j}$ and $\hil_{\permutation : i}$ via $C_{j}^{\ancilla : i}(\swap^{\permutation}_{j, i})$\;
                }
            }
        \end{algorithm}
    \end{minipage}
\caption{\label{figure:QuantumFisherYatesShuffle}
    Quantum circuit constructions for the disentangling and for the entangling shuffle.
    For a given $m, n \in \N$, both algorithms input an $m n$-qubit data register $\hil_{\data}$ whose $n$ subregisters $\hil_{\data : k}$, $0 \leq k \leq n - 1$, are mutually disentangled and initialised arbitrarily.
    Additionally, they input an $n \lceil \log_{2}(n)\rceil$-qubit register $\hil_{\permutation}$, initialised to all-zero.
    Analogously to the state preparation routines, \autoref{algorithm:DisentanglingShuffle} operates on an $\lceil\log_{2}(n)\rceil$-qubit ancilla register $\hil_{\ancilla}$, while \autoref{algorithm:EntanglingShuffle} requires $\sum_{i = 1}^{n - 1} \numbits{i}$ ancilla qubits.
    Except for the uncomputation of the ancilla register (lines 10-12 in \autoref{algorithm:DisentanglingShuffle}), both algorithms follow the same steps:
    First, they initialize the register $\hil_{\permutation}$ in the identity's (inverted) word representation, then iteratively bring the (respective subregister of the) ancilla register in a uniform superposition of the first $i + 1$ computational basis states, and, for all $0 \leq j \leq i$, swap the subregisters $\hil_{\data : j}$ and $\hil_{\data : i}$ as well as $\hil_{\permutation : j}$ and $\hil_{\permutation : i}$ controlled on whether the ancilla register is in the state $\ket{j}_{\ancilla}$.
}
\end{figure*}

When addressing the tasks \eqref{equation:DisentanglingQuantumFisherYatesStateGeneration} and \eqref{equation:DisentanglingQuantumFisherYatesShuffle} of leaving the permutation register (and the data register in case of shuffling) and the ancilla register disentangled after the execution of the circuit, we uncompute the ancilla state after each iteration.
This can be done efficiently by controlling a cascade of \x-gates necessary to transform $\ket{j}_{\ancilla}$ back to $\ket{0}_{\ancilla}$ on the $j$-th subregister of the permutation register being in the state $\ket{i}_{\data}$.\footnote{
    Note that such an approach is not feasible without recording the permutation history in the permutation register, since the control on the state $\ket{i}_{\permutation: j}$ implicitly requires the knowledge that the $i$-th element of the identity's (inverted) word representation prior to permutation was $i$.
}
In contrast, when addressing the less restrictive tasks \eqref{equation:EntanglingQuantumFisherYatesStateGeneration} and \eqref{equation:EntanglingQuantumFisherYatesShuffle}, we simply leave the ancilla register entangled with the permutation register (and the data register in case of shuffling) and introduce for each iteration an additional ancilla register comprising $\numbits{i} \coloneqq \lfloor\log_{2}(i)\rfloor + 1$ qubits.

For the creation of the uniform superposition of all states $\ket{j}_{\ancilla}$, $0 \leq j \leq i$, we utilise a quantum circuit $U_{i}$, following a construction recently introduced by \citet{Shukla2024AnEfficientQuantumAlgorithmForPreparationOfUniformQuantumSuperpositionStates}.
$U_{i}$ operates solely on a $\numbits{i}$-qubit register without the need for additional ancilla qubits.
We detail their circuit construction in \autoref{section:ResourceAnalysis} when analysing it's gate and cycle count.

The just described steps give rise to our five algorithms addressing the five previously formulated in \autoref{section:TheObjective}:
a disentangling state preparation (\autoref{algorithm:DisentanglingStatePreparation}) implementing the unitary $A$ as defined in \eqref{equation:DisentanglingQuantumFisherYatesStateGeneration}, an entangling version of the state preparation (\autoref{algorithm:EntanglingStatePreparation}) implementing the unitary $\tilde{A}$ from \eqref{equation:EntanglingQuantumFisherYatesStateGeneration}, a disentangling shuffle (\autoref{algorithm:DisentanglingShuffle}) implementing the unitary $B$ as defined in \eqref{equation:DisentanglingQuantumFisherYatesShuffle}, an entangling version of the shuffle (\autoref{algorithm:EntanglingShuffle}) which implements the unitary $\tilde{B}$ from \eqref{equation:EntanglingQuantumFisherYatesShuffle}, and a light version of the entangling shuffle (\autoref{algorithm:EntanglingLightShuffle}) implementing $C$ from \eqref{equation:EntanglingQuantumFisherYatesLightShuffle}.
In the \hyperref[section:ProofOfCorrectness]{Appendix}, we supply a proof for the following theorem regarding the correctness of the disentangling shuffle.

\begin{theorem}\label{theorem:DisentanglingShuffleCorrectness}
    \autoref{algorithm:DisentanglingShuffle} implements the unitary $B$ as defined via \eqref{equation:DisentanglingQuantumFisherYatesShuffle}.
\end{theorem}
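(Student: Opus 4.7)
The plan is to proceed by induction on the outer-loop index $i$, tracking the joint state of $\hil_{\data} \otimes \hil_{\permutation} \otimes \hil_{\ancilla}$ after the $i$-th pass through the outer \texttt{for}-loop of \autoref{algorithm:DisentanglingShuffle}, for $i = 0, 1, \ldots, n - 1$. The induction hypothesis is that at this point the state reads
\begin{align*}
    &\frac{1}{\sqrt{(i + 1)!}} \sum_{\sigma \scriptin \symgroup_{i + 1}} \Bigg(\bigotimes_{k = 0}^{i} \ket{\psi_{\sigma^{-1}(k)}}_{\data : k}\Bigg) \otimes \Bigg(\bigotimes_{k = i + 1}^{n - 1} \ket{\psi_{k}}_{\data : k}\Bigg) \\
    &\quad\ \otimes \Bigg(\bigotimes_{k = 0}^{i} \ket{\sigma^{-1}(k)}_{\permutation : k}\Bigg) \otimes \Bigg(\bigotimes_{k = i + 1}^{n - 1} \ket{k}_{\permutation : k}\Bigg) \otimes \ket{0}_{\ancilla},
\end{align*}
where $\symgroup_{i + 1}$ is regarded as the symmetric group on $\{0, \ldots, i\}$. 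The base case $i = 0$ is immediate: the first \texttt{for}-loop writes the identity's inverted word representation into the permutation register while the ancilla is untouched. For $i = n - 1$ the hypothesis exactly reproduces the right-hand side of \eqref{equation:DisentanglingQuantumFisherYatesShuffle}, so the theorem follows.

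For the inductive step I would split iteration $i$ into three stages. First, the subroutine $U_{i}$ brings the ancilla into $(i + 1)^{-1/2} \sum_{j = 0}^{i} \ket{j}_{\ancilla}$. Second, the controlled swaps apply the transposition $\tau_{j, i}$ in parallel on the data and permutation registers in every ancilla branch $\ket{j}_{\ancilla}$ with $j < i$, while the branch $\ket{i}_{\ancilla}$ is left untouched, consistent with $\tau_{i, i} = \id$. Via the extension $\sigma \mapsto \tilde{\sigma}$ of \autoref{lemma:RecursiveGenerationOfSymmetricGroup}, the pair $(j, \sigma) \in \{0, \ldots, i\} \times \symgroup_{i}$ corresponds bijectively to $\pi = \tau_{j, i} \tilde{\sigma} \in \symgroup_{i + 1}$, and the relation $\tilde{\sigma}(i) = i$ forces $j = \pi(i)$, equivalently $\pi^{-1}(j) = i$. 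Hence the state is now a uniform superposition over $\pi \in \symgroup_{i + 1}$ in which each branch has its ancilla label coherently recorded in the permutation register through the identity $\ket{\pi^{-1}(\pi(i))}_{\permutation : \pi(i)} = \ket{i}$. Third, the uncomputation loop returns the ancilla to $\ket{0}_{\ancilla}$, after which the induction hypothesis holds at level $i + 1$.

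The technically most delicate step, and the one I expect to be the main obstacle, is verifying correctness of this uncomputation. The key observation is that within every branch labelled by $\pi \in \symgroup_{i + 1}$, the subregister $\hil_{\permutation : k}$ contains $\ket{i}$ for exactly one $k \in \{0, \ldots, i\}$, namely $k = \pi(i)$; for all other $k$ the stored value $\pi^{-1}(k)$ lies in $\{0, \ldots, i - 1\}$. Consequently, among the controls $C_{i}^{\permutation : j}$ for $j = 1, \ldots, i$, at most one fires per branch, so the $\x$-cascades never interfere within a branch: when $\pi(i) \in \{1, \ldots, i\}$, the single firing control triggers $\x_{\pi(i) \to 0}$ and maps $\ket{\pi(i)}_{\ancilla} \to \ket{0}_{\ancilla}$. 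The one branch that requires extra care is $\pi(i) = 0$, for which $\hil_{\permutation : 0}$ holds $\ket{i}$ but the loop deliberately omits $j = 0$; this is admissible because the ancilla in that branch is already $\ket{0}_{\ancilla}$ and needs no correction. Once these case distinctions are assembled, the ancilla factors out as $\ket{0}_{\ancilla}$ across all branches, closing the induction and establishing \autoref{theorem:DisentanglingShuffleCorrectness}.
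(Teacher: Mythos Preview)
Your proposal is correct and follows essentially the same inductive strategy as the paper: both track the joint state through the three stages of each outer-loop iteration (uniform ancilla preparation, controlled swaps, uncomputation) and invoke \autoref{lemma:RecursiveGenerationOfSymmetricGroup} to reindex the double sum over $\{0,\ldots,i\} \times \symgroup_{i}$ as a single sum over $\symgroup_{i+1}$. Your treatment of the uncomputation is in fact slightly more explicit than the paper's, since you spell out that at most one control $C_{i}^{\permutation : j}$ fires per branch and you handle the edge case $\pi(i) = 0$ (ancilla already $\ket{0}_{\ancilla}$, loop omits $j = 0$) separately, whereas the paper leaves these points implicit.
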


From this, we immediately obtain the state generation task \eqref{equation:DisentanglingQuantumFisherYatesStateGeneration} as a special case for $\hil_{\data} = \C$.
\begin{corollary}\label{corollary:DisentanglingStateGenerationCorrectness}
    \autoref{algorithm:DisentanglingStatePreparation} implements the unitary $A$ as defined via \eqref{equation:DisentanglingQuantumFisherYatesStateGeneration}.
\end{corollary}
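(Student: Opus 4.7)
The plan is to obtain \autoref{corollary:DisentanglingStateGenerationCorrectness} as a direct specialisation of \autoref{theorem:DisentanglingShuffleCorrectness} to a trivial data register. Concretely, I would set $\hil_{\data} = \C$, so that each subregister $\hil_{\data : k}$ is zero-qubit, each arbitrary input state $\ket{\psi_{k}}_{\data : k}$ reduces to the scalar $1 \in \C$, and every controlled-\swap-gate $C_{j}^{\ancilla}(\swap^{\data}_{j, i})$ acts as the identity. First I would check that under this substitution, the pseudocode of \autoref{algorithm:DisentanglingShuffle} collapses line-for-line onto that of \autoref{algorithm:DisentanglingStatePreparation}: both circuits begin with the same cascade of $\x_{0 \to k}$ gates preparing the identity's inverted word representation, and in each outer iteration $1 \leq i \leq n - 1$ they apply the same $U_{i}$, the same controlled swaps on $\hil_{\permutation}$ (the data-register swaps having vanished), and the same uncomputation cascade of $C_{i}^{\permutation : j}(\x_{j \to 0})$ gates. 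Hence the unitary implemented by \autoref{algorithm:DisentanglingStatePreparation} coincides with the restriction of $B$ to the trivial data factor.

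Second, I would verify that equation~\eqref{equation:DisentanglingQuantumFisherYatesShuffle} degenerates to \eqref{equation:DisentanglingQuantumFisherYatesStateGeneration} under the same identification. For every $\sigma \in \symgroup_{n}$ the tensor factor $\bigotimes_{k = 0}^{n - 1} \ket{\psi_{\sigma^{-1}(k)}}_{\data : k}$ collapses to the scalar $1$, so the right-hand side of \eqref{equation:DisentanglingQuantumFisherYatesShuffle} becomes precisely $\bigl(\tfrac{1}{\sqrt{n!}} \sum_{\sigma} \ket{\sigma}_{\permutation}\bigr) \otimes \ket{0}_{\ancilla}$, while the left-hand side becomes $A \ket{\bm{0}}_{\permutation} \otimes \ket{0}_{\ancilla}$. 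Combining the two observations, \autoref{theorem:DisentanglingShuffleCorrectness} immediately yields the claim.

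I do not expect a genuine obstacle: the entire content of the corollary is the bookkeeping confirming that no piece of \autoref{algorithm:DisentanglingShuffle} or equation~\eqref{equation:DisentanglingQuantumFisherYatesShuffle} carries hidden nontrivial data-register content once $\hil_{\data}$ is set to $\C$. The only point I would state explicitly is that the uncomputation block (lines 9--11 of \autoref{algorithm:DisentanglingShuffle}) also appears unchanged in \autoref{algorithm:DisentanglingStatePreparation}, so that the disentangling property -- and not merely the correctness of the superposition -- transfers to $A$.
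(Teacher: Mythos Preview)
Your proposal is correct and matches the paper's own argument exactly: the paper simply notes that the corollary follows from \autoref{theorem:DisentanglingShuffleCorrectness} ``as a special case for $\hil_{\data} = \C$,'' and your write-up merely spells out the bookkeeping behind that one-line reduction.
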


Furthermore, by omitting the controlled bit flips $C_{i}^{\permutation : j}(\x_{j \to 0}^{\ancilla})$ and instead acting on a new subregister of the ancilla register in each iteration, we obtain \autoref{algorithm:EntanglingShuffle} from \autoref{algorithm:DisentanglingShuffle} and keep the validity of \eqref{equation:EntanglingQuantumFisherYatesShuffle}.
\begin{corollary}\label{corollary:EntanglingShuffleCorrectness}
    \autoref{algorithm:EntanglingShuffle} implements the unitary $\tilde{B}$ as defined via \eqref{equation:EntanglingQuantumFisherYatesShuffle}.
\end{corollary}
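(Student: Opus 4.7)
The plan is to mirror the inductive proof of \autoref{theorem:DisentanglingShuffleCorrectness} (supplied in the appendix), modifying only the treatment of the ancilla register: in each iteration $i$, a fresh subregister $\hil_{\ancilla : i}$ now replaces the single, repeatedly uncomputed ancilla. Concretely, I would proceed by induction on $i$, establishing that directly after the $i$-th pass through the outer loop, the joint state on $\hil_{\data} \otimes \hil_{\permutation} \otimes \hil_{\ancilla : 1} \otimes \cdots \otimes \hil_{\ancilla : i}$ equals
\begin{equation*}
    \frac{1}{\sqrt{(i+1)!}} \sum_{\sigma \scriptin \symgroup_{i+1}} \bigotimes_{k=0}^{n-1} \ket{\psi_{\tilde{\sigma}^{-1}(k)}}_{\data : k} \otimes \ket{\tilde{\sigma}}_{\permutation} \otimes \ket{\phi_{\sigma}}_{\ancilla},
\end{equation*}
where $\tilde{\sigma} \in \symgroup_{n}$ extends $\sigma$ trivially on $\{i+1, \ldots, n-1\}$ and $\ket{\phi_{\sigma}}_{\ancilla}$ is the computational basis state $\bigotimes_{i' = 1}^{i} \ket{j_{i'}}_{\ancilla : i'}$ recording the sequence of transposition indices associated with $\sigma$ by the iterated bijection of \autoref{lemma:RecursiveGenerationOfSymmetricGroup}.

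The base case $i = 0$ follows immediately from the initial bit flips, which place $\hil_{\permutation}$ in the identity's inverted word representation and leave all other registers untouched. For the inductive step, I would observe that the two non-trivial substeps of iteration $i$ in \autoref{algorithm:EntanglingShuffle}—the preparation $U_{i}$ on $\hil_{\ancilla : i}$ and the subsequent cascade of controlled swaps on $\hil_{\data}$ and $\hil_{\permutation}$—are bit-for-bit identical to the corresponding substeps in \autoref{algorithm:DisentanglingShuffle}. They therefore act on the inductive state exactly as in the proof of \autoref{theorem:DisentanglingShuffleCorrectness}, producing a uniform superposition over pairs $(j, \sigma) \in \{0, \ldots, i\} \times \symgroup_{i}$ that are coherently labeled by the ancilla subregister $\hil_{\ancilla : i}$. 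Applying \autoref{lemma:RecursiveGenerationOfSymmetricGroup} then reindexes this superposition as a uniform superposition over $\symgroup_{i+1}$ indexed by $\tau_{j, i} \tilde{\sigma}$, completing the induction step.

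The conceptual difference to the disentangling case is that, by using a fresh ancilla subregister and omitting the controlled bit-flip cascade (lines 10-12 of \autoref{algorithm:DisentanglingShuffle}), the index $j$ sampled in iteration $i$ is never erased but instead retained in $\hil_{\ancilla : i}$, giving rise to the tensor-product structure of $\ket{\phi_{\sigma}}_{\ancilla}$. Because \autoref{algorithm:EntanglingShuffle} never revisits $\hil_{\data}$ or $\hil_{\permutation}$ after the controlled swaps of iteration $i$, the marginal on those two registers is identical to the one produced by \autoref{algorithm:DisentanglingShuffle} and is thus unaffected by the absence of uncomputation. After the final iteration $i = n - 1$, the invariant reduces to exactly \eqref{equation:EntanglingQuantumFisherYatesShuffle}.

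The only potential obstacle is bookkeeping: verifying that the simultaneous controlled swaps on $\hil_{\data}$ and $\hil_{\permutation}$ correctly realise the map $\tilde{\sigma} \mapsto \tau_{j, i} \tilde{\sigma}$ on the inverted word representation, while consistently repositioning the data states $\ket{\psi_{\tilde{\sigma}^{-1}(k)}}_{\data : k}$ between subregisters $j$ and $i$. This is inherited verbatim from the proof of \autoref{theorem:DisentanglingShuffleCorrectness}; the only genuinely new observation needed here is that the entangled ancilla states $\ket{\phi_{\sigma}}_{\ancilla}$ emerge automatically from the bijective labeling guaranteed by \autoref{lemma:RecursiveGenerationOfSymmetricGroup}, with no further quantum operations required.
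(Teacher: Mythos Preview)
Your proposal is correct and follows the same approach as the paper, which derives the corollary from \autoref{theorem:DisentanglingShuffleCorrectness} by observing that omitting the controlled bit-flip uncomputation (lines 10--12 of \autoref{algorithm:DisentanglingShuffle}) and instead using a fresh ancilla subregister $\hil_{\ancilla : i}$ per iteration leaves the action on $\hil_{\data} \otimes \hil_{\permutation}$ untouched while retaining the transposition index $j$ in the ancilla. The paper states this in a single sentence; your version spells out the modified inductive invariant and explicitly identifies $\ket{\phi_{\sigma}}_{\ancilla}$ as the computational basis state recording the monotone-factorisation history, which is a faithful and slightly more detailed rendering of the same argument.
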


By simply omitting the permutation register $\hil_{\permutation}$ and all \swap-gates involving permutation subregisters, we further obtain \autoref{algorithm:EntanglingLightShuffle} from \autoref{algorithm:EntanglingShuffle}.
\begin{corollary}\label{corollary:EntanglingLightShuffleCorrectness}
    \autoref{algorithm:EntanglingLightShuffle} implements the unitary $C$ as defined via \eqref{equation:EntanglingQuantumFisherYatesLightShuffle}.
\end{corollary}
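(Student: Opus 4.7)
The plan is to derive Corollary~\ref{corollary:EntanglingLightShuffleCorrectness} from the already-established Corollary~\ref{corollary:EntanglingShuffleCorrectness} by recognising Algorithm~\ref{algorithm:EntanglingLightShuffle} as the $\hil_{\permutation}$-free restriction of Algorithm~\ref{algorithm:EntanglingShuffle}. Explicitly, Algorithm~\ref{algorithm:EntanglingLightShuffle} arises from Algorithm~\ref{algorithm:EntanglingShuffle} by deleting the initial \x-cascade $X_{p}$ on $\hil_{\permutation}$ and, in each iteration~$i$, the controlled-swap block $V_{p}^{(i)} \coloneqq \prod_{j=0}^{i-1} C_{j}^{\ancilla:i}(\swap^{\permutation}_{j,i})$; all remaining gates compose into the unitary $C$ on $\hil_{\data} \otimes \hil_{\ancilla}$ that Algorithm~\ref{algorithm:EntanglingLightShuffle} realises.

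First I would establish the factorisation $\tilde{B} = W \cdot C \cdot X_{p}$, where $W \coloneqq V_{p}^{(n-1)} \cdots V_{p}^{(1)}$ and all operators are tacitly extended by identities to $\hil_{\data} \otimes \hil_{\permutation} \otimes \hil_{\ancilla}$. This follows by pushing every $V_{p}^{(i)}$ leftward past all subsequent $U_{j}$ and $V_{d}^{(j)} \coloneqq \prod_{j'=0}^{j-1} C_{j'}^{\ancilla:j}(\swap^{\data}_{j',j})$ gates while preserving the relative order among the $V_{p}^{(i)}$'s themselves. Two elementary commutations suffice: $V_{p}^{(i)}$ is diagonal in the computational basis of $\hil_{\ancilla:i}$ and targets $\hil_{\permutation}$, disjoint from $\hil_{\data}$, so it commutes with every $V_{d}^{(j)}$; and $V_{p}^{(i)}$ acts trivially on $\hil_{\ancilla:j}$ for $j \neq i$, hence commutes with $U_{j}$ for such $j$. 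Different $V_{p}^{(i)}$'s need not mutually commute, but this poses no obstacle: one pushes $V_{p}^{(1)}$ only past $U_{k}, V_{d}^{(k)}$ for $k \geq 2$, then $V_{p}^{(2)}$ past $U_{k}, V_{d}^{(k)}$ for $k \geq 3$, and so on.

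Applied to $\bigotimes_{k} \ket{\psi_{k}}_{\data:k} \otimes \ket{\bm{0}}_{\permutation} \otimes \ket{0}_{\ancilla}$, the factorisation first yields $\ket{\Omega}_{\data,\ancilla} \otimes \ket{\mathrm{id}}_{\permutation}$ after applying $X_{p}$ and $C$, where $\ket{\Omega}_{\data,\ancilla} \coloneqq C \bigl( \bigotimes_{k} \ket{\psi_{k}}_{\data:k} \otimes \ket{0}_{\ancilla} \bigr)$ is the to-be-identified output of Algorithm~\ref{algorithm:EntanglingLightShuffle}. Since $W$ is a composition of ancilla-controlled swaps on $\hil_{\permutation}$, it is block-diagonal in the $\hil_{\ancilla}$-basis and maps $\ket{\mathrm{id}}_{\permutation} \ket{\gamma}_{\ancilla} \mapsto \ket{\pi_{\gamma}}_{\permutation} \ket{\gamma}_{\ancilla}$, with $\pi_{\gamma}$ the monotone-factorisation permutation encoded in $\gamma$; iterating Lemma~\ref{lemma:RecursiveGenerationOfSymmetricGroup} shows that $\gamma \mapsto \pi_{\gamma}$ is a bijection between the valid ancilla basis states (those on which the $U_{i}$'s assign non-zero amplitude, i.e.\ with every $\hil_{\ancilla:i}$ holding a value in $\{0,\ldots,i\}$) and $\symgroup_{n}$. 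Expanding $\ket{\Omega}_{\data,\ancilla} = \sum_{\gamma} \ket{\omega_{\gamma}}_{\data} \otimes \ket{\gamma}_{\ancilla}$, equating the resulting $\tilde{B}$-output with the form furnished by Corollary~\ref{corollary:EntanglingShuffleCorrectness}, and projecting onto each $\ket{\gamma}_{\ancilla}$, the bijection forces $\ket{\phi_{\sigma}}_{\ancilla} = \ket{\gamma(\sigma)}_{\ancilla}$ and $\ket{\omega_{\gamma(\sigma)}}_{\data} = \frac{1}{\sqrt{n!}} \bigotimes_{k} \ket{\psi_{\sigma^{-1}(k)}}_{\data:k}$, while $\ket{\omega_{\gamma}}_{\data} = 0$ for any other $\gamma$; upon resumming, this recovers~\eqref{equation:EntanglingQuantumFisherYatesLightShuffle}. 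The main obstacle is the commutation rearrangement: while every pairwise commutation is elementary, careful ordering is required so that the mutual non-commutativity of different $V_{p}^{(i)}$'s does not scramble their order in $W$.
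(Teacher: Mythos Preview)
Your argument is correct and considerably more detailed than the paper's own treatment. The paper dispatches this corollary in a single sentence --- ``By simply omitting the permutation register $\hil_{\permutation}$ and all \swap-gates involving permutation subregisters, we further obtain \autoref{algorithm:EntanglingLightShuffle} from \autoref{algorithm:EntanglingShuffle}'' --- whereas you make this precise by establishing the operator factorisation $\tilde{B} = W \cdot C \cdot X_{p}$ via explicit commutations and then reading off the action of $C$ from \autoref{corollary:EntanglingShuffleCorrectness}. Both routes rest on the same observation (the $\hil_{\permutation}$-gates decouple from the rest), but yours actually proves it rather than asserting it; the price is that you must invoke the monotone-factorisation bijection of \autoref{lemma:RecursiveGenerationOfSymmetricGroup} and argue that $\ket{\omega_{\gamma}} = 0$ on invalid ancilla strings, which the paper's one-liner sidesteps entirely.

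One minor slip: your stated order of commutations is backwards. If you push $V_{p}^{(1)}$ leftward first, it collides with $V_{p}^{(2)}$ before it can reach $U_{3}, V_{d}^{(3)}$, and you have (correctly) noted that $V_{p}^{(1)}$ and $V_{p}^{(2)}$ need not commute. The clean order is decreasing: move $V_{p}^{(n-2)}$ first (past $V_{d}^{(n-1)} U_{n-1}$), then $V_{p}^{(n-3)}$ (past $V_{d}^{(n-2)} U_{n-2} V_{d}^{(n-1)} U_{n-1}$), and so on, so that each $V_{p}^{(i)}$ lands immediately to the right of the already-assembled block $V_{p}^{(n-1)} \cdots V_{p}^{(i+1)}$ without ever crossing another $V_{p}$. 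With this correction the factorisation goes through as you intend.
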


In the same way we derived \autoref{corollary:DisentanglingStateGenerationCorrectness} from \autoref{theorem:DisentanglingShuffleCorrectness}, we can proceed for the less restrictive tasks \eqref{equation:EntanglingQuantumFisherYatesShuffle} and \eqref{equation:EntanglingQuantumFisherYatesStateGeneration}.
\begin{corollary}\label{corollary:EntanglingStateGenerationCorrectness}
    \autoref{algorithm:EntanglingStatePreparation} implements the unitary $\tilde{A}$ as defined via \eqref{equation:EntanglingQuantumFisherYatesStateGeneration}.
\end{corollary}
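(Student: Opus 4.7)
The plan is to obtain this corollary as a direct specialisation of \autoref{corollary:EntanglingShuffleCorrectness}, mirroring exactly how \autoref{corollary:DisentanglingStateGenerationCorrectness} was derived from \autoref{theorem:DisentanglingShuffleCorrectness}. Concretely, I would set $\hil_{\data} = \C$, i.e.\ take the data register to be the trivial one-dimensional Hilbert space. Under this specialisation, every subregister $\hil_{\data : k}$ is one-dimensional, each state $\ket{\psi_{k}}_{\data : k}$ reduces to a scalar, and every control-swap $C_{j}^{\ancilla : i}(\swap_{j, i}^{\data})$ collapses to the identity on $\C \otimes \C$.

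Next, I would perform a line-by-line comparison of \autoref{algorithm:EntanglingShuffle} and \autoref{algorithm:EntanglingStatePreparation}. With the data control-swaps (line~6 of \autoref{algorithm:EntanglingShuffle}) trivialised, the remaining instructions---initialising $\hil_{\permutation}$ in the identity's inverted word representation, applying $U_{i}$ on $\hil_{\ancilla : i}$, and performing the control-swaps on $\hil_{\permutation : j}$ and $\hil_{\permutation : i}$---match the steps of \autoref{algorithm:EntanglingStatePreparation} verbatim, including the loop structure and the use of separate ancilla subregisters $\hil_{\ancilla : i}$ of size $\numbits{i}$. Hence the unitary implemented by \autoref{algorithm:EntanglingStatePreparation} equals the unitary implemented by \autoref{algorithm:EntanglingShuffle} under $\hil_{\data} = \C$, up to trivial tensor factors.

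Finally, I would substitute $\hil_{\data} = \C$ into the defining identity \eqref{equation:EntanglingQuantumFisherYatesShuffle} of $\tilde{B}$. The data tensor factors collapse, yielding
\begin{align*}
    \tilde{A} \ket{\bm{0}}_{\permutation} \otimes \ket{0}_{\ancilla} = \frac{1}{\sqrt{n!}} \sum_{\sigma \scriptin \symgroup_{n}} \ket{\sigma}_{\permutation} \otimes \ket{\phi_{\sigma}}_{\ancilla},
\end{align*}
which is precisely \eqref{equation:EntanglingQuantumFisherYatesStateGeneration}. Combined with \autoref{corollary:EntanglingShuffleCorrectness}, this establishes the claim.

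The only potential obstacle is the textual verification that the gate sequences agree under the specialisation $\hil_{\data} = \C$; this is a routine comparison of the two pseudocodes and involves no new ideas, so no genuine difficulty arises beyond the specialisation argument itself.
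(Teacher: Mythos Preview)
Your proposal is correct and follows exactly the route the paper indicates: the paper's own justification for this corollary is the single sentence ``In the same way we derived \autoref{corollary:DisentanglingStateGenerationCorrectness} from \autoref{theorem:DisentanglingShuffleCorrectness}, we can proceed for the less restrictive tasks \eqref{equation:EntanglingQuantumFisherYatesShuffle} and \eqref{equation:EntanglingQuantumFisherYatesStateGeneration},'' i.e.\ specialise \autoref{corollary:EntanglingShuffleCorrectness} to $\hil_{\data} = \C$. Your line-by-line comparison of the two pseudocodes and the collapse of the data tensor factors in \eqref{equation:EntanglingQuantumFisherYatesShuffle} to obtain \eqref{equation:EntanglingQuantumFisherYatesStateGeneration} make this specialisation explicit, which the paper leaves implicit.
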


We readily obtain the method introduced by \citet{Barenco1997StabilizationOfQuantumComputationsBySymmetrization} by enlarging the ancilla subregister $\hil_{\ancilla : i}$ in \autoref{algorithm:EntanglingLightShuffle} to contain $i$ qubits, respectively, by replacing $U_{i}$ with the uniform state preparation of the all-zero state and all $i$
computational basis states with Hamming weight one, and by concatenating this with applying the inverse ancilla state preparations and a subsequent measurement of the ancilla register.

\begin{figure}[!th]
\begin{algorithm}[H]\label{algorithm:EntanglingLightShuffle}
    \caption{$C(\hil_{\data}, \hil_{\ancilla}, n)$}
    \tcc{$\hil_{\ancilla}$ is assumed to be composed of subregisters $\hil_{\ancilla : i}$, $1 \leq i \leq n - 1$ with $\numbits{i}$ qubits, respectively}
    \For{$k \leftarrow 1$ \KwTo $n - 1$}{
        apply $\x_{0 \to k}$ on the register $\hil_{\permutation : k}$\;
    }
    \For{$i \leftarrow 1$ \KwTo $n - 1$}{
        apply $U_{i}$ on the register $\hil_{\ancilla : i}$\;
        \For{$j \leftarrow 0$ \KwTo $i - 1$}{
            control-swap $\hil_{\data : j}$ and $\hil_{\data : i}$ via $C_{j}^{\ancilla : i}(\swap^{\data}_{j, i})$\;
        }
    }
\end{algorithm}
\caption{\label{figure:QuantumFisherYatesLightShuffle}
    Quantum circuit constructions for the entangling light shuffle.
    For a given $m, n \in \N$, the algorithm inputs an $m n$-qubit data register $\hil_{\data}$ whose $n$ subregisters $\hil_{\data : k}$, $0 \leq k \leq n - 1$, are mutually disentangled and initialized arbitrarily, and an ancilla register $\hil_{\ancilla}$ comprising $\sum_{i = 1}^{n - 1} \numbits{i}$ qubits.
    The algorithm iteratively brings the respective ancilla subregister in a uniform superposition of the first $i + 1$ computational basis states, and, for all $0 \leq j \leq i$, swaps the subregisters $\hil_{\data : j}$ and $\hil_{\data : i}$ controlled on whether the ancilla register is in the state $\ket{j}_{\ancilla}$.
}
\end{figure}

\section{\label{section:ResourceAnalysis}Resource analysis}

Throughout this section, we assume $m \in \N$ and $n \in \N$ to be fixed, but arbitrary.
For a given number $k \in \N$, $\hamming(k)$ denotes it's Hamming weight, i.e.\ the number of ones in it's binary representation.
Furthermore, we denote with $\qubitc(W)$ the number of qubits on which a unitary $W$ acts non-trivially, with $\gatec(W)$ the gate count of $W$, and with $\cyclec(W)$ it's circuit depth/cycle count.

\subsection{\label{subsection:NumberOfQubits}Number of qubits}

First, the disentangling state preparation (\autoref{algorithm:DisentanglingStatePreparation}) acts on an $n \lceil \log_{2}(n)\rceil$-qubit permutation register $\hil_{\permutation}$ and uses an ancilla register $\hil_{\ancilla}$ for applying the swaps of subregisters of $\hil_{\permutation}$ in superposition.
Since the ancilla register's state is uncomputed after each iteration (lines 10-12), we can reuse it's qubits for the subsequent one.
Therefore, it's size is determined by how many qubits we need at most at once to represent any of the computational basis states on which we control the subregister swaps.
This maximum is readily attained in the last iteration where we have to store the number $n - 1$ in binary, yielding the required register size of $\numbits{n - 1} = \lceil\log_{2}(n)\rceil$.
In summary, we obtain
\begin{align}\label{equation:DisentanglingStatePreparationQubitCount}
    \qubitc(A) = n \lceil \log_{2}(n)\rceil + \lceil \log_{2}(n)\rceil = (n + 1) \lceil \log_{2}(n)\rceil.
\end{align}

Second, the entangling state preparation (\autoref{algorithm:EntanglingStatePreparation}) also operates on an $n \lceil \log_{2}(n)\rceil$-qubit permutation register $\hil_{\permutation}$, but requires a larger ancilla register $\hil_{\ancilla}$ since it does not uncompute the ancilla register after an iteration.
Therefore, the $i$-th iteration has to be carried out on a fresh set of qubits.
The largest number we have to represent during iteration $i$ is indeed the number $i$;
the corresponding subregister of $\hil_{\ancilla}$ has to contain at least $\numbits{i}$ qubits.
In summary, this yields a qubit count of
\begin{align}\label{equation:EntanglingStatePreparationQubitCount}
    \qubitc(\tilde{A}) = n \lceil \log_{2}(n)\rceil + \sum_{i = 1}^{n - 1} \numbits{i}.
\end{align}

Third, the disentangling shuffle (\autoref{algorithm:DisentanglingShuffle}) requires the same registers as the disentangling state preparation plus an additional data register to be shuffled.
Assuming $n$ data points and $m$ qubits per data point, this yields a total qubit count of
\begin{align}\label{equation:DisentanglingShuffleQubitCount}
    \qubitc(B) = \qubitc(A) + m n = (n + 1) \lceil \log_{2}(n)\rceil + m n.
\end{align}

Fourth, the qubit count for the entangling shuffle (\autoref{algorithm:EntanglingShuffle}) can be derived from the entangling state preparation analogously to the just derived qubit count \eqref{equation:DisentanglingShuffleQubitCount};
we obtain
\begin{align}\label{equation:EntanglingShuffleQubitCount}
    \qubitc(\tilde{B}) = \qubitc(\tilde{A})\hspace*{-1pt} +\hspace*{-1pt} m n = n \lceil \log_{2}(n)\rceil\hspace*{-1pt} +\hspace*{-3pt} \sum_{i = 1}^{n - 1} \numbits{i}\hspace*{-1pt} +\hspace*{-1pt} m n.
\end{align}

Fifth, in comparison to $\tilde{B}$, the light version of the entangling shuffle (\autoref{algorithm:EntanglingLightShuffle}) simply drops the permutation register, thus we arrive at a qubit count of
\begin{align}\label{equation:EntanglingLightShuffleQubitCount}
    \qubitc(C) = \qubitc(\tilde{B}) - n \lceil \log_{2}(n)\rceil = \sum_{i = 1}^{n - 1} \numbits{i} + m n.
\end{align}

\subsection{\label{subsection:GateCount}Gate count}

Next, we consider the gate counts for the five algorithms.
The initialisation of the permutation register, which is part of the first four methods, flips the qubits of the $k$-th subregister $\hil_{\permutation : k}$ to yield the binary representation of the number $k$, where $1 \leq k \leq n - 1$.
The number of \x-gates necessary to obtain $\ket{k}_{\permutation : k}$ from $\ket{0}_{\permutation : k}$ is simply the Hamming weight of $k$, $\hamming(k)$, i.e.\ the number of ones in $k$'s binary representation.
In each iteration $1 \leq i \leq n - 1$, we have to account for the cost of implementing the uniform state preparation $U_{i}$ of the first $i + 1$ computational basis states as well as of the implementation costs of the $i$ multi-controlled subregister swaps.
A full (multi-controlled) subregister swap is comprised of $\lceil \log_{2}(n)\rceil$ individual (multi-controlled) \swap-gates.
A \swap-gate, in turn, can be decomposed into three \cx-gates with alternating control and target qubit.
If the \swap-gate is controlled, it suffices to merely control the middle \cx-gate on the same qubits (compare \cite[Exercise 4.25]{Nielsen2010QuantumComputationAndQuantumInformation}).
Lastly, in order to disentangle the ancilla register $\hil_{\ancilla}$ after each iteration $i$, we apply, for all $1 \leq j \leq i$, a cascade of $\hamming(j)$ \x-gates transforming $\ket{j}_{\ancilla}$ to $\ket{0}_{\ancilla}$, but controlled on the subregister $\hil_{\permutation : j}$ being in the state $\ket{i}_{\permutation : i}$ which requires us to control each of the \x-gates on $\numbits{i}$ qubits.
In summary, we obtain a gate count of
\begin{align}\label{equation:DisentanglingStatePreparationGateCount}
\begin{split}
    \gatec(A) &= \sum_{i = 1}^{n - 1} \big(\hamming(i) \gatec(\x) + \gatec(U_{i}) + i \lceil \log_{2}(n)\rceil \times \\
    &\, (2 \gatec(\cx)\hspace*{-2pt} +\hspace*{-2pt} \gatec(\control^{\numbits{i}\hspace*{-1pt} +\hspace*{-1pt} 1}\x))\hspace*{-2pt} +\hspace*{-4pt} \sum_{j = 1}^{i}\hspace*{-3pt} \hamming(j) \gatec(\control^{\numbits{i}}\x)\big)
\end{split}
\end{align}
for the implementation of \autoref{algorithm:DisentanglingStatePreparation}.
For the entangling version of the state preparation, we simply omit the gate count for the uncomputations after each iteration, thereby obtaining
\begin{align}\label{equation:EntanglingStatePreparationGateCount}
\begin{split}
    \gatec(\tilde{A}) &= \sum_{i = 1}^{n - 1} \big(\hamming(i) \gatec(\x) + \gatec(U_{i}) \\
    &\quad + i \lceil \log_{2}(n)\rceil (2 \gatec(\cx) + \gatec(\control^{\numbits{i} + 1}\x))\big).
\end{split}
\end{align}

The gate count for the disentangling shuffle is the sum of the gate count \eqref{equation:DisentanglingStatePreparationGateCount} and the cumulative costs of implementing all the controlled subregister swaps of the data register $\hil_{\data}$, each of which is comprised of $m$ qubits.
Applying the same decomposition technique to those additional swap gates, we obtain
\begin{align}\label{equation:DisentanglingShuffleGateCount}
\begin{split}
    \gatec(B) &= \gatec(A) + \sum_{i = 1}^{n - 1} i m (2 \gatec(\cx) + \gatec(\control^{\numbits{i} + 1}\x)) \\
    &= \sum_{i = 1}^{n - 1} \big(\hamming(i) \gatec(\x) + \gatec(U_{i}) \\
    &\ + i (\lceil \log_{2}(n)\rceil\hspace*{-1pt} +\hspace*{-1pt} m) (2 \gatec(\cx)\hspace*{-1pt} +\hspace*{-1pt} \gatec(\control^{\numbits{i} + 1}\x)) \\
    &\ + \sum_{j = 1}^{i} \hamming(j) \gatec(\control^{\numbits{i}}\x)\big).
\end{split}
\end{align}

Analogously, we can derive the gate count of the entangling shuffle by adding the implementation cost of the data subregister swaps to the gate count \eqref{equation:EntanglingStatePreparationGateCount}, yielding a gate count of
\begin{align}\label{equation:EntanglingShuffleGateCount}
\begin{split}
    \gatec(\tilde{B}) &= \gatec(\tilde{A}) + \sum_{i = 1}^{n - 1} i m (2 \gatec(\cx) + \gatec(\control^{\numbits{i} + 1}\x)) \\
    &= \sum_{i = 1}^{n - 1} \big(\hamming(i) \gatec(\x) + \gatec(U_{i}) \\
    &\ + i (\lceil \log_{2}(n)\rceil\hspace*{-2pt} +\hspace*{-2pt} m) (2 \gatec(\cx)\hspace*{-2pt} +\hspace*{-2pt} \gatec(\control^{\numbits{i} + 1}\x))\big)
\end{split}
\end{align}
for the implementation of \autoref{algorithm:EntanglingShuffle}.
Subtracting the gate count for the initialisation of the permutation register as well as for the \swap-gates acting on permutation subregisters, we further arrive at the gate count for the entangling light shuffle:
\begin{align}\label{equation:EntanglingLightShuffleGateCount}
    \gatec(C) &= \sum_{i = 1}^{n - 1} \big(\gatec(U_{i})\hspace*{-2pt} +\hspace*{-2pt} i m (2 \gatec(\cx)\hspace*{-2pt} +\hspace*{-2pt} \gatec(\control^{\numbits{i} + 1}\x))\big).
\end{align}

\subsection{\label{subsection:CircuitDepth}Circuit depth}

Lastly, we consider the cycle counts of the five algorithms.
We make the following assumption which, however, only plays a role for the initialisation of the permutation register $\hil_{\permutation}$ in the first four methods:
Gates acting on disjoint qubits can be executed in parallel, constituting a single quantum cycle.
This assumption reduces the cycle count for the initialisation subroutine to the cycle count of a single \x-gate since all \x-gates act on distinct qubits of the permutation register.
For the remaining steps in \autoref{algorithm:DisentanglingStatePreparation}, we only give an upper bound on the cycle count in order to avoid distinguishing too many different cases (further parallelisation techniques are subject to an actual compiler/transpiler):
We simply assume that no gates of two iterations $1 \leq i_{1}, i_{2} \leq n - 1$ can be applied in parallel, hence the total cycle count is the sum of each iteration's cycle count.
Furthermore, we assume that there is no gate overlap between different subroutines within a single iteration.
Effectively, we may therefore replace all subroutines' gate counts in \eqref{equation:DisentanglingStatePreparationGateCount} with their respective cycle counts (except for the initialisation of the permutation register), yielding
\begin{align}\label{equation:DisentanglingStatePreparationCycleCount}
\begin{split}
    \cyclec(A) &= \cyclec(\x) + \sum_{i = 1}^{n - 1} \big(\cyclec(U_{i}) + i \lceil \log_{2}(n)\rceil (2 \cyclec(\cx) \\
    &\quad + \cyclec(\control^{\numbits{i} + 1}\x)) + \sum_{j = 1}^{i} \hamming(j) \cyclec(\control^{\numbits{i}}\x)\big).
\end{split}
\end{align}

The entangling state preparation's cycle count can be determined similarly to \eqref{equation:DisentanglingStatePreparationCycleCount}.
First, we simply omit the costs of implementing the uncomputation operations after each iteration.
Second, since each iteration acts on a different subregister of the ancilla register $\hil_{\ancilla}$, we can parallelise all the uniform state preparations $U_{i}$, $1 \leq i \leq n - 1$, on the ancilla subregisters.
Therefore, we only incur the maximum over all the $U_{i}$'s cycle counts rather than their sum.
In summary, we thus obtain
\begin{align}\label{equation:EntanglingStatePreparationCycleCount}
\begin{split}
    \cyclec(\tilde{A}) &= \cyclec(\x) + \max_{1 \leq i \leq n - 1}(\cyclec(U_{i})) \\
    &\quad + \sum_{i = 1}^{n - 1} i \lceil \log_{2}(n)\rceil (2 \cyclec(\cx) + \cyclec(\control^{\numbits{i} + 1}\x)).
\end{split}
\end{align}

The derivation of the cycle count for the disentangling shuffle follows the same assumptions as for the disentangling state preparation.
In addition, we assume the data subregister swaps to not overlap with any of the other steps.
This yields an overall cycle count of
\begin{align}\label{equation:DisentanglingShuffleCycleCount}
\begin{split}
    \cyclec(B) &= \cyclec(A) + \sum_{i = 1}^{n - 1} i m (2 \cyclec(\cx) + \cyclec(\control^{\numbits{i} + 1}\x)) \\
    &= \cyclec(\x) + \sum_{i = 1}^{n - 1} \big(\cyclec(U_{i}) + i (\lceil \log_{2}(n)\rceil + m) \times \\
    &\, (2 \cyclec(\cx)\hspace*{-2pt} +\hspace*{-2pt} \cyclec(\control^{\numbits{i}\hspace*{-1pt} +\hspace*{-1pt} 1}\x))\hspace*{-2pt} +\hspace*{-3pt} \sum_{j = 1}^{i}\hspace*{-2pt} \hamming(j) \cyclec(\control^{\numbits{i}}\x)\big)
\end{split}
\end{align}
for implementing \autoref{algorithm:DisentanglingShuffle}.
The costs for the entangling shuffle are analogously derived and yield
\begin{align}\label{equation:EntanglingShuffleCycleCount}
\begin{split}
    \cyclec(\tilde{B}) &= \cyclec(\tilde{A}) + \sum_{i = 1}^{n - 1} i m (2 \cyclec(\cx) + \cyclec(\control^{\numbits{i}}\x)) \\
    &= \cyclec(\x) + \max_{1 \leq i \leq n - 1}(\cyclec(U_{i})) \\
    &\ + \sum_{i = 1}^{n - 1} i (\lceil \log_{2}(n)\rceil\hspace*{-2pt} +\hspace*{-2pt} m) (2 \cyclec(\cx)\hspace*{-2pt} +\hspace*{-2pt} \cyclec(\control^{\numbits{i}}\x))
\end{split}
\end{align}
for the implementation of \autoref{algorithm:EntanglingShuffle}.
Omitting the initialisation of the permutation register as well as all \swap-gates acting on permutation subregisters, we obtain the cycle count for \autoref{algorithm:EntanglingLightShuffle}:
\begin{align}\label{equation:EntanglingLightShuffleCycleCount}
\begin{split}
    \cyclec(C) &= \max_{1 \leq i \leq n - 1}(\cyclec(U_{i})) \\
    &\ + \sum_{i = 1}^{n - 1} i m (2 \cyclec(\cx)\hspace*{-2pt} +\hspace*{-2pt} \cyclec(\control^{\numbits{i}}\x)).
\end{split}
\end{align}

\subsection{\label{subsection:AnalysisOfSubroutinesAndIndividualGates}Analysis of subroutines and individual gates}

Let us now inspect the gate and cycle count of the superposition creating circuit $U_{i}$ as proposed by~\cite{Shukla2024AnEfficientQuantumAlgorithmForPreparationOfUniformQuantumSuperpositionStates}.
In the following, let $i \in \N$ be arbitrary, but fixed.
If $i + 1 = 2^{r}$ for some $r \in \N$, then the generation of the uniform superposition of all computational basis $\ket{j}$, $0 \leq j \leq i$ degenerates to the simple application of \h-gates to each of the $\numbits{i}$ qubits.
In this case, we thus obtain $\gatec(U_{i}) = \numbits{i} \gatec(\h)$ and $\cyclec(U_{i}) = \cyclec(\h)$.
Otherwise, let $s_{0}, \ldots, s_{\hamming(i + 1)} \in \N$ denote the positions of the non-zero bits of the binary representation of $i + 1$, where we start counting from zero and $s_{0}$ corresponds to the position of the least significant bit.
Following the pseudo code in \cite[Algorithm 1]{Shukla2024AnEfficientQuantumAlgorithmForPreparationOfUniformQuantumSuperpositionStates}, we can easily infer a gate count of
\begin{align}\label{equation:ShuklaVedulaGateCount}
\begin{split}
    \gatec(U_{i}) &= s_{0} \gatec(\h) + \gatec(\ry) + (\hamming(i + 1) - 1) \gatec(\cry) \\
    &\quad + \sum_{k = 1}^{\hamming(i + 1)} \big(s_{k} - s_{k - 1}\big) \gatec(\ch).
\end{split}
\end{align}
From all operations within $U_{i}$, only the initial layer of Hadamard gates on all qubits before the least significant qubit may be applied in parallel, that is
\begin{align}\label{equation:ShuklaVedulaCycleCount}
\begin{split}
    \cyclec(U_{i}) &= \min(1, s_{0}) \cyclec(\h)\hspace*{-2pt} +\hspace*{-2pt} \cyclec(\ry)\hspace*{-2pt} +\hspace*{-2pt} (\hamming(i + 1) - 1) \times \\
    &\quad \cyclec(\cry) + \sum_{k = 1}^{\hamming(i + 1)} \big(s_{k} - s_{k - 1}\big) \cyclec(\ch).
\end{split}
\end{align}

This concludes the gate and cycle count for all ``high-level'' subroutines.
We expressed all their resource requirements in terms of implementation costs of \x-gates, \h-gates, and \ry-gates, their respective singly-controlled versions as well as multiply-controlled \x-gates.
The single-qubit gates $\x$, $\h$, and $\ry$ are typically native gates or can be synthesised from other native gates with small overhead \cite{Pino2021DemonstrationOfTheTrappedIonQuantumCCDComputerArchitecture,Graham2022MultiQubitEntanglementandAlgorithmsOnANeutralAtomQuantumComputer,IBM2023Heron}.
In practice, this means that $1 \leq \gatec(\x), \gatec(\h), \gatec(\ry) \leq 4$ (the cycle counts are identical to the gate counts) \cite{Pino2021DemonstrationOfTheTrappedIonQuantumCCDComputerArchitecture,Graham2022MultiQubitEntanglementandAlgorithmsOnANeutralAtomQuantumComputer,IBM2023Heron}.
Their controlled versions (with the partial exception of the $\cx$-gate) are usually not naively supported on any architecture and have to be synthesised from the available architecture-specific two-qubit gate and additional single-qubit gates.
The following bounds are realistic for non-photonic architectures: $1 \leq \gatec(\cx), \gatec(\ch), \gatec(\cry) \leq 39$ (again, the cycle counts match the gate counts) \cite{Pino2021DemonstrationOfTheTrappedIonQuantumCCDComputerArchitecture,Graham2022MultiQubitEntanglementandAlgorithmsOnANeutralAtomQuantumComputer,IBM2023Heron}.
The use of error-correcting codes will generally alter the set of natively available \emph{logical} gates, with the resulting set being highly dependent on the hardware and code employed \cite{Kubischta2023FamilyOfQuantumCodesWithExoticTransversalGates}.
In contrast, some error-mitigation techniques allow for utilising the full breadth of native physical gates \cite{Cai2023QuantumErrorMitigation, Koczor2021ExponentialErrorSuppressionForNearTermQuantumDevices, Huggins2021VirtualDistillationForQuantumErrorMitigation}.

Lastly, determining the gate and cycle count for $\control^{m}\x$-gates, $m \geq 2$, is more nuanced than the previous considerations.
In the special case $m = 2$, i.e.\ the case of the Toffoli gate, the usual ancilla-free (and also optimal~\cite{Shende2008OnTheCNOTCostOfTOFFOLIGates}) decomposition requires six $\cx$-gates and nine single-qubit gates (see e.g.\ \cite[Figure 4.9]{Nielsen2010QuantumComputationAndQuantumInformation}).
Without any additional ancilla qubits, the best known decompositions of a $\control^{m}\x$-gate requires $\bigo(2^{m})$ $\cx$-gates and single-qubit gates~\cite[Lemma 7.1]{Barenco1995ElementaryGatesForQuantumComputation}.
This can be drastically reduced once sufficiently many ancilla qubits are available:
For $m \geq 3$, a $\control^{m}\x$-gate can be decomposed into $2 (m - 1)$ Toffoli gates, utilising $m - 2$ ancilla qubits (see e.g.\ \cite[Figure 4.10]{Nielsen2010QuantumComputationAndQuantumInformation} for $m = 5$).
This construction assumes each of the ancilla qubits to be initialised in the $\ket{0}$ state, but also uncomputes the ancilla qubits.
In our case, however, we do not want to include more qubits than previously determined in \eqref{equation:DisentanglingStatePreparationQubitCount}--\eqref{equation:EntanglingLightShuffleQubitCount}.
Alternatively, we may always borrow qubits from a data/permutation subregister which is currently not swapped or does not control the uncomputation step for the disentangling versions.
Since these qubits are not guaranteed to be in the $\ket{0}$ state, we have to enlarge the decomposition to include $4 (m - 2)$ Toffoli gates for a toggle-based implementation and its uncomputation.
To each of these Toffoli gates we may now apply the previously discussed decomposition.
In summary, this yields $\gatec(\control^{m}\x) = 4 (m - 2) (6 \gatec(\cx) + 7 \gatec(\rz) + 2 \gatec(\h))$ and the corresponding cycle count $\cyclec(\control^{m}\x)$ is typically not substantially smaller.

\subsection{\label{subsection:AsymptoticResourceScaling}Asymptotic resource scaling}

From the previous quantitative resource analysis, we now infer the qualitative asymptotic scaling.
First, the scalings of the respective qubit counts are rather straightforward.
We readily infer that $\qubitc(A) \in \bigo(n \log(n))$ and that $\qubitc(B) \in \bigo((m + \log(n)) n)$.
For the other three algorithms, we first observe that for the ancilla register's size it holds that
\begin{align*}
    \sum_{i = 1}^{n} \numbits{i} &= \sum_{i = 1}^{n - 1} (\lfloor\log_{2}(i)\rfloor + 1) \leq n + \sum_{i = 1}^{n - 1} \log_{2}(i) \\
    &\leq n + \log_{2}\hspace*{-3pt}\Bigg(\prod_{i = 1}^{n} i\Bigg)\hspace*{-3pt} = n + \log_{2}(n!) \in \bigo(n \log(n))
\end{align*}
by Stirling's approximation.\footnote{
    In fact, one can further lower bound this quantity by the equally scaling sum of the first $n$ logarithms, thus arriving at the statement $\qubitc(\tilde{A}) \in \Theta(n \log(n))$, i.e.\ $\qubitc(\tilde{A}) \in \bigo(n \log(n))$ and $n \log(n) \in \bigo(\qubitc(\tilde{A}))$.
    Following the subsequent derivations thoroughly, one can actually replace ``$\bigo$'' with ``$\Theta$'' for the scaling of qubit, gate, and cycle count of all algorithms.
}
Thus, we conclude that $\qubitc(\tilde{A}) \in \bigo(n \log(n))$, $\qubitc(\tilde{B}) \in \bigo((m + \log(n)) n)$, and $\qubitc(C) \in \bigo((m + \log(n)) n)$.
Lastly, since the Shukla-Vedula construction for $U_{i}$ does not require any ancilla qubits, it holds that $\qubitc(U_{i}) \in \bigo(\log(i))$.

We continue with the asymptotic scaling of gate and cycle count.
As it turns out, both quantities scale identically for all presented algorithms; therefore, it is sufficient to focus on the gate count in the following.
On the most elementary level, we state that \x-gate, \h-gate, \ry-gate, and \rz-gate as well as their singly-controlled versions can be decomposed into a constant number of native gates.
This also implies that $\gatec(\control^{m}\x) \in \bigo(m)$.
Furthermore, a number's Hamming weight is trivially upper bounded by the number's binary length, hence $\hamming(i) \in \bigo(\log(i))$.
The gate counts for $A$ and $\tilde{A}$ are (asymptotically) clearly dominated by the costs of implementing $i$ permutation subregister swaps in each iteration $1 \leq i \leq n - 1$, scaling as $\bigo(i \log(n)^{2})$.
Executing the summation over all iterations establishes that $\gatec(A), \gatec(\tilde{A}) \in \bigo(n^{2} \log(n)^{2})$.
Similarly, for $B$ and $\tilde{B}$, the most costly components are the $i$ data and permutation register swaps per iteration, pushing their gate counts to lie in $\bigo((m + \log(n)) n^{2} \log(n))$.
With an analogous reasoning we also obtain that $\gatec(C) \in \bigo(m n^{2} \log(n))$.
Moreover, as already pointed out by \citet{Shukla2024AnEfficientQuantumAlgorithmForPreparationOfUniformQuantumSuperpositionStates}, the gate count for their state preparation circuit $U_{i}$ scales as $\bigo(\log(i))$.
The qubit and gate/cycle count complexities are again summarised in \autoref{table:AsymptoticScaling}.

\begin{table*}[!ht]
    \begin{minipage}{0.48\linewidth}
    \begin{tabular}{|c|c|c|}\hline
        algorithm   & qubit count (scaling)     & gate/cycle count (scaling) \\\hline
        $A$         & $\bigo(n \log(n))$        & $\bigo(n^{2} \log(n)^{2})$ \\
        $\tilde{A}$ & $\bigo(n \log(n))$        & $\bigo(n^{2} \log(n)^{2})$ \\
        $B$         & $\bigo((m + \log(n)) n)$  & $\bigo((m + \log(n)) n^{2} \log(n))$ \\
        $\tilde{B}$ & $\bigo((m + \log(n)) n)$  & $\bigo((m + \log(n)) n^{2} \log(n))$ \\
        $C$         & $\bigo((m + \log(n)) n)$  & $\bigo(m n^{2} \log(n))$ \\
        $U_{i}$     & $\bigo(\log(i))$          & $\bigo(\log(i))$ \\
        \hline
    \end{tabular}
    \caption{\label{table:AsymptoticScaling}
        Asymptotic scaling of qubit, gate, and cycle count for the state preparation and shuffling algorithms as well as for the state preparation circuit proposed by \citet{Shukla2024AnEfficientQuantumAlgorithmForPreparationOfUniformQuantumSuperpositionStates}.
        Here, $n$ denotes the number of elements/subregisters to permute and $m$ is the size of each data subregister.
    }
    \end{minipage}\hfill
    \begin{minipage}{0.48\linewidth}
    \begin{tabular}{|c|c|c|}\hline
        algorithm               & qubit count (scaling)     & gate/cycle count (scaling) \\\hline
        $A_{\text{oh}}$         & $\bigo(n \log(n))$        & $\bigo(n^{2} \log(n))$ \\
        $\tilde{A}_{\text{oh}}$ & $\bigo(n^{2})$        & $\bigo(n^{2} \log(n))$ \\
        $B_{\text{oh}}$         & $\bigo((m + \log(n)) n)$  & $\bigo((m + \log(n)) n^{2})$ \\
        $\tilde{B}_{\text{oh}}$ & $\bigo((m + n) n)$  & $\bigo((m + \log(n)) n^{2})$ \\
        $C_{\text{oh}}$         & $\bigo((m + n) n)$  & $\bigo(m n^{2})$ \\
        $V_{i}$                 & $\bigo(i)$          & $\bigo(i)$ \\
        \hline
    \end{tabular}
    \caption{\label{table:AsymptoticScalingOneHot}
        Asymptotic scaling of qubit, gate, and cycle count for the state preparation and shuffling algorithms with one-hot-encoded controls as well as for the state preparation circuit sketched by \citet{Barenco1997StabilizationOfQuantumComputationsBySymmetrization}.
        Here, $n$ is the number of elements/subregisters to permute and $m$ is the size of each data subregister.
    }
    \end{minipage}
\end{table*}

There is a generally applicable trade-off between the size of the ancilla register and the gate and cycle count of each of our methods:
Instead of using a $\lceil\log_{2}(n)\rceil$-qubit ancilla register for $A$ and $B$, we may introduce $n - 1$ ancilla qubits and prepare, in each iteration $1 \leq i \leq n - 1$, the uniform superposition of the all-zero state and $i$ computational basis states with Hamming weight one.
The respective state preparation circuit $V_{i}$ is sketched in \cite{Barenco1997StabilizationOfQuantumComputationsBySymmetrization} and requires $\bigo(i)$ gates and cycles.
This one-hot encoding of the control structures has the advantage that each of the \swap-gates only has to be controlled on a single qubit, avoiding the decomposition of multi-controlled gates with an $\bigo(\log(i))$ overhead.
The uncomputation of the ancilla qubits also becomes faster since, for each $1 \leq j \leq i$, only one multi-controlled \x-gate (instead of $\hamming(j)$ multi-controlled \x-gates) has to be applied to the ancilla register.
Overall, this technique eliminates one $\log(n)$ factor from the asymptotic scaling of gate and cycle count while introducing $\bigo(n)$ ancilla qubits.
At least for $A$ and $B$, the additional qubit requirements are still dominated by the $\bigo(n \log(n))$ qubit demand of the permutation register.
Applying the same technique to the entangling versions $\tilde{A}$, $\tilde{B}$, and $C$ also eliminates one $\log(n)$ factor from the gate/cycle demands, but increases the asymptotic qubit count:
In order to employ the state preparation $V_{i}$, each of the ancilla subregisters $\hil_{\ancilla : i}$ has to contain $i$ (instead of $\numbits{i}$) qubits, yielding an overall requirement of $\bigo(n^{2})$ ancilla qubits.
The resulting qubit, gate, and cycle complexities are summarised in \autoref{table:AsymptoticScalingOneHot}.

Finally, we compare the just derived qualitative qubit, gate, and cycle requirements with the ones of similar algorithms found in the literature.
Barenco~\textit{et~al.}'s projection into the symmetrised subspace is almost identical to $C_{\text{oh}}$, the entangling light shuffle with one-hot-encoded control, and indeed admits the same qubit count of $\bigo((m + n) n)$ and gate/cycle count of $\bigo(m n^{2})$.
In contrast, Chiew~\textit{et~al.}'s implementation as well as Adhikary's sampling algorithm are based on controlled sequences of adjacency transpositions and a binary encoding of the controls.
They are conceptually close to our algorithms $\tilde{A}$ and $C$, the entangling state preparation and the entangling light shuffle, with the exception that in each iteration $1 \leq i \leq n - 1$, they apply $\bigo(i^{2})$ controlled subregister swaps (instead of $\bigo(i)$ controlled subregister swaps).
This places their respective qubit requirements in $\bigo(n \log(n))$ and $\bigo((m + \log(n)) n)$, and their respective gate/cycle counts in $\bigo(n^{3} \log(n)^{2})$ and $\bigo(m n^{3} \log(n))$.\footnote{
    In their paper, \citet{Chiew2019GraphComparisonViaNonlinearQuantumSearch} derived a time complexity of $\bigo(n^{2} \log(n))$ for their state preparation routine which, unfortunately, is not correct as they did not take into account that each $i$ iteration applies $\bigo(i)$ many subregister swaps.
}
The method proposed by \citet{Bai2025AQuantumSpeedupAlgorithmForTSPBasedOnQuantumDynamicProgrammingWithVeryFewQubits} is conceptually close to our algorithm $A$, the disentangling state preparation.
Correspondingly, they also achieve a qubit requirement of $\bigo(n \log(n))$.
However, their gate and cycle requirements are of order $\bigo(n^{5 / 2} \log(n))$ and thus higher than for our \swap-based approach.\footnote{
    The extra factor $\log(n)$ which is not present in the runtime complexity derived by \citet{Bai2025AQuantumSpeedupAlgorithmForTSPBasedOnQuantumDynamicProgrammingWithVeryFewQubits} stems from decomposing the multi-controlled $\x$-gates in their algorithm.
}
Lastly, the TSP state preparation method introduced by \citet{Baertschi2020GroverMixersForQAOAShiftingComplexityFromMixerDesignToStatePreparation} is the only algorithm that does really arise as a special case of any of our algorithms.
This is due to the fact that they represent the permutations of $n$ elements as $n \times n$ permutation matrices whose entries correspond to single qubits.
In this lighter encoding of permutations ($n^{2}$ instead of $n \lceil\log_{2}(n)\rceil$ qubits), they can entirely avoid an ancilla register during the state preparation and have ``enough space'' to parallelise several controlled \swap-gates in each iteration.
In summary, they require $\bigo(n^{2})$ qubits, $\bigo(n^{3})$ gates, and $\bigo(n^{2})$ cycles.
That is, at the expense of a larger qubit demand, they are able to beat algorithm $A_{\text{oh}}$ in terms of cycle requirements, but also require more (unparallelised) gates.

\section{\label{section:Conclusion}Conclusion}

In this work, we presented a natural and efficient quantisation of the classical Fisher–Yates shuffle, leading to a family of quantum algorithms for preparing uniform superpositions over permutations of composite qubit registers. Our quantisation principle is straightforward: whenever a classical operation $D_{j}$ depends on a uniformly sampled integer $j \in J$, we replace it with a uniform superposition over all $j \in J$ encoded in an ancilla register and apply the quantised version of $D_{j}$, controlled on the ancilla being in state $\ket{j}$.

We proposed five algorithmic variants tailored to different objectives: constructing uniform superpositions over the inverted word representations of all permutations (either entangled with an ancilla or disentangled), generating superpositions of shuffled composite registers with the corresponding inverted word representations stored in an auxiliary register (again, with or without entanglement), and preparing uniform superpositions of shuffled registers entangled with an ancilla.
To support reproducibility and facilitate further research, we provide an open-source implementation of all algorithms using Qiskit.

Our resource analysis demonstrates that the quantum Fisher–Yates shuffle achieves an asymptotic optimum among known constructions targeting this problem class.
Specifically, it operates with only $\bigo(n \log(n))$ qubits and achieves a gate and cycle complexity of $\bigo(n^{2} \log(n))$.
In contrast, alternative approaches either require more qubits -- e.g., the method in~\cite{Baertschi2020GroverMixersForQAOAShiftingComplexityFromMixerDesignToStatePreparation} has a depth of only $\bigo(n^{2})$, but needs $\bigo(n^{2})$ qubits and $\bigo(n^{3})$ gates -- or incur higher gate and depth complexity when constrained to $\bigo(n \log(n))$ qubits, as in~\cite{Chiew2019GraphComparisonViaNonlinearQuantumSearch} and~\cite{Bai2025AQuantumSpeedupAlgorithmForTSPBasedOnQuantumDynamicProgrammingWithVeryFewQubits}.

These observations lead us to conjecture that the quantum Fisher–Yates shuffle may be optimal within a formal framework that remains to be precisely defined.
This intuition is bolstered by the classical Fisher–Yates shuffle’s optimal linear runtime for sampling uniform random permutations.
We strongly suspect that any alternative quantisation of a classically suboptimal method will not surpass the quantum Fisher–Yates construction.
Nevertheless, it remains possible that intrinsically quantum strategies, without classical analogues, might yield superior performance. We leave this as an open and intriguing direction for future research.

A natural extension of our work involves generalising the data register transformations beyond simple subregister swaps.
One could imagine using arbitrary unitary representations of the symmetric group acting on some Hilbert space $\hil_{\data}$, controlled by an ancilla $\hil_{\ancilla}$ that records the applied permutation in superposition.
While such a generalisation is straightforward in the entangled case, constructing disentangled versions is more subtle: these require the ancilla to be uncomputed after each iteration $1 \leq i \leq n - 1$, where transpositions $\tau_{j, i}$, $0 \leq j < i$, are applied in superposition.
If the action of a transposition cannot be unambiguously inferred from the resulting state of $\hil_{\data}$, uncomputation becomes impossible.
Determining which representations of the symmetric group admit this iterative uncomputation remains an open problem for future investigation.

\begin{acknowledgments}
    We thank Tobias J. Osborne for helpful discussions, pointing us to some relevant literature, and proof-reading parts of the manuscript.
    LB acknowledges financial support by the Quantum Valley Lower Saxony, the BMBF project QuBRA, and the Alexander von Humboldt Foundation.
    MS acknowledges financial support by the Alexander von Humboldt Foundation.

    \noindent\textbf{Data and code availability statement.}
    All code supporting the findings of this study is available at \cite{Binkowski2025QuantumFisherYates}.
\end{acknowledgments}

\bibliographystyle{apsrev4-2}
\bibliography{main.bib}

\appendix

\onecolumngrid

\section{\label{section:ProofOfCorrectness}Proof of Correctness}

\begin{proof}[Proof of \autoref{theorem:DisentanglingShuffleCorrectness}]
    Let $n \in \N$, then the initial state of the composite register $\hil_d \otimes \hil_p \otimes \hil_a$ after applying the cascade of \smash{$\x_{0 \to k}^{\permutation : k}$}-gates with $1 \leq k \leq n - 1$ (lines 1-3) is given by
    \begin{align*}
        \ket{\iota} \coloneqq \Bigg(\bigotimes_{k = 0}^{n - 1} \ket{\psi_{k}}_{\data : k}\Bigg) \otimes \Bigg(\bigotimes_{k = 0}^{n - 1} \ket{k}_{\permutation : k}\Bigg) \otimes \ket{0}_{\ancilla}.
    \end{align*}
    We will subsequently prove via an inductive argument that the state prepared before the $i$-th iteration of the main loop (lines 4-13) is given by $\ket{\xi_{i}}$ defined as
    \begin{align*}
        \ket{\xi_{i}} \coloneqq \frac{1}{\sqrt{i!}} \sum_{\sigma \scriptin \symgroup_{i}} \Bigg[ \underbrace{\Bigg(\bigotimes_{k = 0}^{i - 1} \ket{\psi_{\sigma^{-1}(k)}}_{\data : k}\Bigg) \otimes  \Bigg( \bigotimes_{k = i}^{n - 1} \ket{\psi_{k}}_{\data : k} \Bigg)}_{\in \hil_{\data}} \otimes \underbrace{\Bigg( \bigotimes_{k = 0}^{i - 1} \ket{\sigma^{-1}(i)} \Bigg) \otimes \Bigg(\bigotimes_{k = i}^{n - 1} \ket{k}_{\permutation : k} \Bigg)}_{\in \hil_{\permutation}} \otimes \ket{0}_{\ancilla} \Bigg].
    \end{align*}
    Clearly, for $i = 1$ one recovers the state $\ket{\iota}$ when considering that $\symgroup_{1}$ is the trivial group consisting of only the neutral element $\id$ with $\id = \id^{-1}$, whose one-line notation is given simply by $0$. This leaves open the task of proving that the $i$-th iteration maps $\ket{\xi_{i}}$ to $\ket{\xi_{i + 1}}$: Applying $U_{i}$ to $\ket{\xi_{i}}$ (line 5) yields
    \begin{align*}
        U_{i} \ket{\xi_{i}} = \frac{1}{\sqrt{(i + 1)!}} \sum_{\sigma \scriptin \symgroup_{i}} \sum_{j = 0}^{i} \Bigg[ \Bigg(\bigotimes_{k = 0}^{i - 1} \ket{\psi_{\sigma^{-1}(k)}}_{\data : k}\Bigg) \otimes \Bigg( \bigotimes_{k = i}^{n - 1} \ket{\psi_{k}}_{\data : k} \Bigg) \otimes \Bigg( \bigotimes_{k = 0}^{i - 1} \ket{\sigma^{-1}(k)}_{\permutation : k} \Bigg) \otimes \Bigg(\bigotimes_{k = i}^{n - 1} \ket{k}_{\permutation : k} \Bigg) \otimes \ket{j}_{\ancilla}  \Bigg]
    \end{align*}
    since $(i + 1)! = i! (i + 1)$. Subsequently, applying the product of all controlled register swaps $C^{\ancilla}_{j}(\swap^{\data}_{j, i})$ and $C^{\ancilla}_{j}(\swap^{\permutation}_{j, i})$ (lines 6-9), we obtain the state
    \begin{align*}
        \frac{1}{\sqrt{(i + 1)!}} \sum_{\sigma \scriptin \symgroup_{i}} \sum_{j = 0}^{i}
        \Bigg[ & \Bigg( \bigotimes_{k = 0}^{j - 1} \ket{\psi_{\sigma^{-1}(k)}}_{\data : k} \Bigg)
        \otimes \ket{\psi_{i}}_{\data : j}
        \otimes \Bigg( \bigotimes_{k = j + 1}^{i - 1} \ket{\psi_{\sigma^{-1}(k)}}_{\data : k} \Bigg)
        \otimes \ket{\psi_{\sigma^{-1}(j)}}_{\data : i}
        \otimes \Bigg( \bigotimes_{k = i + 1}^{n - 1} \ket{\psi_k}_{\data : k}\Bigg) \otimes \\
        \otimes & \phantom{\Bigg( \bigotimes_{k = 0}^{j - 1} \ket{\psi_{\sigma^{-1}(k)}}_{\data : k} \Bigg)}\mathllap{\Bigg( \bigotimes_{k = 0}^{j - 1} \ket{\sigma^{-1}(k)}_{\permutation : k} \Bigg)}
        \otimes \hphantom{\ket{\psi_{i}}_{\data : j}} \mathllap{\ket{i}_{\permutation : j} \;}
        \otimes \hphantom{\Bigg( \bigotimes_{k = j + 1}^{i - 1} \ket{\psi_{\sigma^{-1}(k)}}_{\data : k} \Bigg)}
        \mathllap{\Bigg( \bigotimes_{k = j + 1}^{i - 1} \ket{\sigma^{-1}(k)}_{\permutation : k} \Bigg)}
        \otimes \hphantom{\ket{\psi_{\sigma^{-1}(j)}}_{\data : i}} \mathllap{\ket{\sigma^{-1}(j)}_{\permutation : i}}
        \otimes \hphantom{\Bigg( \bigotimes_{k = i + 1}^{n - 1} \ket{\psi_k}_{\data : k}\Bigg)}
        \mathllap{\Bigg(\bigotimes_{k = i + 1}^{n - 1} \ket{k}_{\permutation : k} \Bigg) \hspace{0.25em}} \otimes \\
        \otimes & \hspace{0.5em} \ket{j}_{\ancilla} \Bigg].
    \end{align*}
    Importantly, we observe that the subregister $\hil_{\permutation : j}$ is in the state \smash{$\ket{i}_{\permutation : j}$} whenever the ancilla register is in the state~$\ket{j}_{\ancilla}$. Therefore, applying the cascade of controlled bit flips \smash{$C_{i}^{\permutation : j}(\x^{\ancilla}_{j \to 0})$} (lines 10-12) maps the ancilla state to $\ket{0}_{\ancilla}$ in each term, disentangling the ancilla register. This ultimately transforms this state into the posterior state
    \begin{align*}
        \ket{\xi_{i}'} = \frac{1}{\sqrt{(i + 1)!}} \sum_{\sigma \scriptin \symgroup_{i}} \sum_{j = 0}^{i}
        \Bigg[ & \Bigg( \bigotimes_{k = 0}^{j - 1} \ket{\psi_{\sigma^{-1}(k)}}_{\data : k} \Bigg) 
        \otimes \ket{\psi_{i}}_{\data : j}
        \otimes \Bigg( \bigotimes_{k = j + 1}^{i - 1} \ket{\psi_{\sigma^{-1}(k)}}_{\data : k} \Bigg)
        \otimes \ket{\psi_{\sigma^{-1}(j)}}_{\data : i}
        \otimes \Bigg( \bigotimes_{k = i + 1}^{n - 1} \ket{\psi_k}_{\data : k}\Bigg) \otimes \\
        \otimes &  \hphantom{\Bigg( \bigotimes_{k = 0}^{j - 1} \ket{\psi_{\sigma^{-1}(k)}}_{\data : k} \Bigg)}
        \mathllap{\Bigg( \bigotimes_{k = 0}^{j - 1} \ket{\sigma^{-1}(k)}_{\permutation : k} \Bigg)}
        \otimes \hphantom{\ket{\psi_{i}}_{\data : j}}
        \mathllap{\ket{i}_{\permutation : j} \;}
        \otimes \hphantom{\Bigg( \bigotimes_{k = j + 1}^{i - 1} \ket{\psi_{\sigma^{-1}(k)}}_{\data : k} \Bigg)} 
        \mathllap{\Bigg( \bigotimes_{k = j + 1}^{i - 1} \ket{\sigma^{-1}(k)}_{\permutation : k} \Bigg)}
        \otimes \hphantom{\ket{\psi_{\sigma^{-1}(j)}}_{\data : i}} 
        \mathllap{\ket{\sigma^{-1}(j)}_{\permutation : i}}
        \otimes \hphantom{\Bigg( \bigotimes_{k = i + 1}^{n - 1} \ket{\psi_k}_{\data : k}\Bigg)}
        \mathllap{\Bigg(\bigotimes_{k = i + 1}^{n - 1} \ket{k}_{\permutation : k} \Bigg) \Bigg]} \otimes \\
        \otimes & \ket{0}_{\ancilla}. \vphantom{\Bigg]}
    \end{align*}
    Now, note that for any given permutation $\sigma \in S_i$ and any $0 \leq j < i$ the equations
    \begin{align*}
        i & = \tilde{\sigma}^{-1}(i) = \tilde{\sigma}^{-1}\big( \tau_{j, i}^{-1}(j) \big) = (\tau_{j, i} \tilde{\sigma})^{-1} (j) \text{ and} \\
        \sigma^{-1}(j) & = \tilde{\sigma}^{-1}(j) = \tilde{\sigma}^{-1}\big( \tau_{j, i}^{-1}(i) \big) = (\tau_{j, i} \tilde{\sigma})^{-1}(i)
    \end{align*}
    hold, where $\tilde{\sigma}$ is defined as in \autoref{lemma:RecursiveGenerationOfSymmetricGroup}.
    Since $\tau_{j, i}$ leaves $0 \leq k < 0$ with $k \neq j$ invariant, we can additionally rewrite $\sigma^{-1}(k)$ as $\sigma^{-1}(k) = (\tau_{j, i} \tilde{\sigma})^{-1}(k)$. Hence, the posterior state $\ket{\xi_{i}'}$ is equivalently given by
    \begin{align*}
    \begin{aligned}
        \ket{\xi_{i}'} = \frac{1}{\sqrt{(i + 1)!}} \sum_{\sigma \in S_i} \sum_{j = 0}^{i} \Bigg[ & \Bigg( \bigotimes_{k = 0}^{n - 1} \ket{\psi_{(\tau_{j, i} \tilde{\sigma})^{-1}(k)}}_{\data : k} \Bigg) \otimes \Bigg( \bigotimes_{k = i + 1}^{n - 1} \ket{\psi_k}_{\data : k}\Bigg) \otimes \\
        \otimes & \Bigg( \bigotimes_{k = 0}^{n - 1} \ket{(\tau_{j, i} \tilde{\sigma})^{-1}(k)}_{\permutation : k} \Bigg) \otimes \Bigg(\bigotimes_{k = i + 1}^{n - 1} \ket{k}_{\permutation : k} \Bigg) \Bigg] \otimes \ket{0}_{\ancilla}.
    \end{aligned}
    \end{align*}
    The equality $\ket{\xi_{i + 1}} = \ket{\xi_{i}'}$ is thereby a consequence of \autoref{lemma:RecursiveGenerationOfSymmetricGroup}.
    Finally, \autoref{theorem:DisentanglingShuffleCorrectness} now follows from the observation that $\ket{\xi_{n}}$ is equal to $B \bigotimes_{k = 0}^{n - 1} \ket{\psi_{k}}_{\data : k} \otimes \ket{\bm{0}}_{\permutation} \otimes \ket{0}_{\ancilla}$ from~\eqref{equation:DisentanglingQuantumFisherYatesShuffle}.
\end{proof}

\twocolumngrid

\end{document}